\theoremstyle{plain}
\newtheorem{theorem}{Theorem}[section]
\theoremstyle{definition}
\theoremstyle{remark}
\newtheorem{remark}[theorem]{Remark}
\newcommand{\bb}[1]{{\mathbb #1}}
\renewcommand{\phi}{\varphi}
\renewcommand{\tilde}{\widetilde}
\renewcommand{\hat}{\widehat}
\definecolor{light}{gray}{.9}
\title[Gibbsian stationary non equilibrium states]{Gibbsian stationary non equilibrium states
}
\author[L.\ De Carlo]{Leonardo De Carlo}
\address{Leonardo De Carlo \hfill\break \indent
  GSSI
  \hfill\break\indent
  Viale Francesco Crispi 7,   67100  L'Aquila, Italy}
\email{neoleodeo@gmail.com}
\author[D.\ Gabrielli]{Davide Gabrielli}
\address{Davide Gabrielli \hfill\break \indent
  DISIM, Universit\`a dell'Aquila
  \hfill\break\indent
  Via Vetoio,   67100 Coppito, L'Aquila, Italy}
\email{gabriell@univaq.it}
\begin{document}

\begin{abstract}
We study the structure of stationary non equilibrium states for interacting particle systems from a microscopic viewpoint.
In particular we discuss two different discrete geometric constructions. We apply
both of them to determine non reversible transition rates corresponding to a fixed invariant measure.
The first one uses the equivalence of this problem with the construction of divergence
free flows on the transition graph. Since divergence free flows are characterized by cyclic decompositions we
can generate families of models from elementary cycles on the configuration space.
The second construction is a functional discrete Hodge decomposition for
translational covariant discrete vector fields. According to this, for example, the instantaneous current
of any interacting particle system on a finite torus can be canonically decomposed in a gradient part, a circulation term
and an harmonic component. All the three components are associated with functions on the configuration space.
This decomposition is unique and constructive. The stationary condition can be interpreted
as an orthogonality condition with respect to an harmonic discrete vector field and we use this decomposition to construct models having a fixed invariant measure.

\bigskip

\noindent {\em Keywords}: Lattice gases, non equilibrium statistical mechanics

\smallskip

\noindent{\em AMS 2010 Subject Classification}:
82C22, 82C20  
\end{abstract}

\maketitle
\thispagestyle{empty}




\section{Introduction}
We propose and study two different ways to generate non reversible stochastic particle systems having a fixed Gibbsian invariant measure. The two methods are quite general and correspond to two different discrete geometric constructions that have an interest in themselves.

The construction of reversible Markovian dynamics having a fixed target invariant measure is a classic problem that is at the core  of the Monte Carlo method. This is very useful in several applicative problems and the literature concerning the method is huge. We suggest for example \cite{D,Metr} and the references therein.

The construction of non reversible Markovian dynamics is more difficult. This is an interesting problem since the violation of detailed balance is expected for example to speed up the convergence of the dynamics \cite{Bi,KJZ,RBS,SB}. Another motivation come from the fact that the understanding of collective and macroscopic behavior of interacting multi-component systems is based on the knowledge of the stationary state. For example the knowledge of the invariant measure is a necessary requirement
in order to have a direct connection between the microscopic details of the dynamics and the transport coefficients. We are constructing a large class of models for which this is possible. For example the violation of Einstein relation that is expected for anisotropic
systems \cite{BBC} could be observed directly inside our class of models.
Several references are dedicated to the problem of finding non reversible Markov chains having a fixed target invariant measure in specific models, see for example \cite{BB,CT,FGS,G,GL,LG,RBS,PSS}. We study the problem in the case of interacting particle systems and describe two somehow general approaches. We consider for simplicity the case of stochastic lattice gases but the methods can be generalized to more general models.

Our first approach is based on the equivalence between the problem of finding rates that have a fixed invariant measure and the problem of finding a divergence free flow on the transition graph. A flow is a map that associates a positive number, representing the mass flowed, to each oriented edge of a graph. The divergence of a flow is the amount of mass flowing outside of a vertex minus the amount of mass flowing into the vertex.   The usefulness of this equivalence is on the fact that the structure of divergence free flows is quite well understood. Any divergence free flow can be indeed represented as a superposition of  flows associated with elementary cycles \cite{GO} (see also \cite{BFG,GV,MC} for more detailed results and applications and section 2 for precise definitions). A local non reversible dynamics is then generated considering elementary cycles on the configuration space for which the configuration of particles is frozen outside of a bounded region. The construction of divergence free discrete vector fields and flows on graphs and oriented graphs using cyclic decompositions is a classic one \cite{GO,B}. There is a large number of applications of this idea in probability and statistical mechanics. For example in \cite{MC,Bi,RBS} it is used to construct non reversible Markov models. This approach can be seen as a counterpart of the combinatorial representation of the invariant measures of Markov chains
in terms of oriented spanning trees \cite{FW}. The novelty of our results is on the fact that we apply it to particle systems. In particular we have a large number of cycles on the big configuration space and we have to suitably assign the weights in such a way to
have local transition rates. In all our models there is a non trivial interplay between the combinatorial structure
of the configuration space and the structure of the physical lattice where the particles evolve.

A very interesting issue here is to understand the relation between the speed of convergence
towards the invariant measure and the structure of the cyclic decomposition of the flow. A natural conjecture is to have a faster convergence in presence of global cycles.

The second approach is based on a new functional Hodge decomposition for translational covariant discrete vector fields. The discrete Hodge decomposition is the splitting of any discrete vector field into a gradient component, a circulation and an harmonic part. Consider a discrete vector field depending in a translational covariant way on  configurations of particles. We show that it is possible to do a similar splitting where the gradient component and the circulation term are respectively a gradient and an orthogonal gradient (in two dimensions) with respect to the shift of two functions on configurations.

A motivation for the functional Hodge decomposition comes from the theory of hydrodynamic scaling limits. In the case of diffusive systems the derivation of the hydrodynamic scaling limit is much simplified in the case of models having the instantaneous current of gradient type. When also the invariant measure is known it is possible to have an explicit form of the transport coefficients.
In the non gradient cases the theory develops starting from an orthogonal decomposition \cite{VY} and the transport coefficients have a representation in terms of the Green Kubo formulas \cite{KL99,Spohn}.
While the orthogonal splitting in \cite{VY} is in general not explicit, our splitting is constructive. It is an interesting issue to study the information encoded by our decomposition, its relation with the variational representation of the transport coefficients \cite{KL99,Spohn} and in particular its relation with the finite dimensional approximation of the transport coefficients analyzed in \cite{AKM}. The decomposition in 2 dimensions introduces also some natural models for which the typical scaling of the current is expected to be composed by a gradient plus an orthogonal gradient of functions of the local density of particles.

The stationary condition for stochastic lattice gases can be naturally interpreted as an orthogonality condition with respect to a given harmonic discrete vector field. To satisfy this condition it is enough to consider discrete vector fields in the orthogonal complement and this can be done using the functional decomposition.

With both methods, we obtain very general classes and families of models naturally parameterized by functions on the configuration space and having similar structures. We are not going to compare the results with the two methods but in principle this is possible. In particular for any model the stationary condition can be interpreted with the two geometric constructions. We show only in a specific case how similar features in the rates appear.

For simplicity, we discuss all our results in dimensions 1 and 2 and for stochastic lattice gases satisfying an exclusion rule. All the arguments can be directly generalized to higher dimensions and to more general state space. In particular in dimensions higher than 2 the same constructions can be done. The basic ideas are the same but more notation and more general geometric constructions are necessary. We concentrate on finite range translational invariant  measures but different situations with long range interactions and/or non translational invariant measures are also interesting. We consider mainly cycles obtained by local perturbations of configurations of particles but the case of non local cycles is also interesting.

\smallskip

The structure of the paper is the following. In Section 2 we introduce notation and discuss some basic tools in graph theory, discrete geometry and Markov processes. In Section 3 we generate non reversible stochastic lattice gases starting from elementary cycles in the configuration space. In Section 4 we prove the functional Hodge decomposition for translational covariant discrete vector fields. In Section 5 we generate non reversible stochastic lattice gases using this decomposition.

\section{Preliminaries}
In this section we fix notation and develop some tools that will be used in the paper.

\subsection{Graphs}
Let $(V, \mathcal E)$ be a finite un-oriented graph without loops. This means that $|V|<+\infty$ and a generic element
of $\mathcal E$, called un-oriented edge or simply an edge, is $\{x,y\}$ a subset of cardinality $2$ of $V$. To every un-oriented graph $(V, \mathcal E)$ we associate canonically an oriented graph $(V,E)$ such that the set of oriented edges $E$ contains all the ordered pairs $(x,y)$ such that $\{x,y\}\in\mathcal E$. Note that if $(x,y)\in E$ then also $(y,x)\in E$. If $e=(x,y)\in E$ we denote
$e^-:=x$ and $e^+:=y$ and we call $\mathfrak e:=\{x,y\}$ the corresponding un-oriented edge. A sequence $(z_0,z_1,\dots ,z_k)$ of elements of $V$ such that $(z_i,z_{i+1})\in E$, $i=0,\dots k-1$, is called an oriented  path, or simply a path. A path with distinct vertices except $z_0=z_k$ is called a cycle. If $C=(z_0,z_1,\dots ,z_k)$ is a cycle and there exists an $i$ such that $(x,y)=(z_i,z_{i+1})$ we write $(x,y)\in C$. Likewise if there exists an $i$ such that $x=z_i$ we write $x\in C$. Two cycles that contain the same collection of oriented edges and differ by the starting point will be identified. We call $\mathcal C$ the collection of all the equivalence classes of finite length cycles and call $C$ a generic element.

A discrete vector field $\phi$ on $(V,\mathcal E)$ is a map $\phi:E\to \mathbb R$ such that $\phi(x,y)=-\phi(y,x)$.
A discrete vector field is of gradient type if there exists a function $g:V\to \mathbb R$ such that
$\phi(x,y)=[\nabla g](x,y):= g(y)-g(x)$.
The divergence of a discrete vector field $\phi$ at $x\in V$ is defined by
$$
\nabla\cdot \phi(x):=\sum_{y\,:\, \{x,y\}\in \mathcal E}\phi(x,y)\,.
$$
We call $\Lambda^1$ the $|\mathcal E|$ dimensional vector space of discrete vector fields. We endow $\Lambda^1$ with the scalar product
\begin{equation}\label{sc}
\langle \phi,\psi\rangle:=\frac 12\sum_{(x,y)\in E}\phi(x,y)\psi(x,y)\,, \qquad \phi,\psi\in \Lambda^1\,.
\end{equation}

A flow on an oriented graph is a map $Q:E\to \mathbb R^+$ that associates the amount of mass flowed $Q(x,y)$ to any edge $(x,y)\in E$. The divergence of a flow $Q$ at site $x\in V$ is defined as
$$
\nabla\cdot Q(x):=\sum_{y:(x,y)\in E}Q(x,y)-\sum_{y:(y,x)\in E}Q(y,x)\,.
$$
Given a cycle $C$ we introduce an elementary flow associated with the cycle by
\begin{equation}\label{lavas}
Q_C(x,y):=\left\{
\begin{array}{ll}
1   & \textrm{if}\ (x,y)\in C\,,  \\
0 & \textrm{if} \ (x,y)\not\in C\,.
\end{array}
\right.
\end{equation}
A similar definition can be given also for a path. Since for each $x\in C$ the outgoing flux is equal to ingoing flux and it is equal to $1$ we have $\nabla\cdot Q_C=0$.

\subsection{Discrete Hodge decomposition}\label{disch}

We recall briefly the Hodge decomposition for discrete vector fields. We consider the case when the graph $(V,\mathcal E)$ is the two dimensional discrete torus of size $N$. This means that the vertices are $\mathbb Z^2_N:=\mathbb Z^2/N\mathbb Z^2$ and the edges are the pairs of nearest neighbors sites. We call $e^{(i)}$, $i=1,2$ the vectors of the canonical basis, i.e. $e^{(1)}=(1,0)$ and $e^{(2)}=(0,1)$. We call $\Lambda^0$ the collection of real valued function defined on the set of vertices
$\Lambda^0:=\{g\, :\, \mathbb Z_N^2\to \mathbb R\}$. We recall also that $\Lambda^1$ denotes the vector space of discrete vector fields
endowed with the scalar product \eqref{sc}. Finally we call $\Lambda^2$ the vector space of 2-forms defined on the faces of the lattice $\mathbb Z^2_N$. Let us define this precisely. An oriented face is for example an elementary cycle in the graph of the type $(x,x+e^{(1)}, x+e^{(1)}+e^{(2)},x+e^{(2)},x)$ (recall that a cycle is an equivalence class modulo cyclic permutations). In this case we have an anticlockwise oriented face. This corresponds geometrically to a square having vertices $x,x+e^{(1)},x+e^{(1)}+e^{(2)},x+e^{(2)}$ plus an orientation in the anticlockwise sense. The same elementary face can be oriented clockwise and this corresponds to the elementary cycle $(x,x+e^{(2)}, x+e^{(1)}+e^{(2)},x+e^{(1)},x)$. If $f$ is a given oriented face we denote by $-f$ the oriented face corresponding to the same geometric square but having opposite orientation. A 2-form is a map $\psi$ from the set of oriented faces $F$ to $\mathbb R$ that is antisymmetric with respect to the change of orientation i.e. such that $\psi(-f)=-\psi(f)$. The boundary $\delta\psi$ of a 2 form $\psi$ is a discrete vector field defined by
\begin{equation}\label{defd}
\delta\psi(e):=\sum_{f\,:\, e\in f}\psi(f)\,.
\end{equation}
Since an oriented face is essentially an elementary oriented cycle the meaning of $e\in f$ has been already discussed.
By construction $\nabla\cdot \delta\psi=0$ for any $\psi$.
The 2 dimensional discrete Hodge decomposition \cite{B,GV,L} is written as the direct sum
\begin{equation}\label{dish2}
\Lambda^1=\nabla \Lambda^0\oplus \delta\Lambda^2\oplus\Lambda^1_H\,,
\end{equation}
where the orthogonality is with respect to the scalar product \eqref{sc}. The discrete vector fields on $\nabla\Lambda^0$ are the gradient ones. The dimension of $\nabla\Lambda^0$ is $N^2-1$. The vector subspace $\delta\Lambda^2$ contains all the discrete vector fields that can be obtained by \eqref{defd} from a given 2 form $\psi$. The dimension of $\delta\Lambda^2$ is $N^2-1$. Elements of $\delta\Lambda^2$ are called circulations. The dimension of $\Lambda_H^1$ is simply 2. Discrete vector fields in $\Lambda^1_H$ are called harmonic. A basis in $\Lambda^1_H$ is given by the vector fields $\varphi^{(1)}$ and $\varphi^{(2)}$ defined by
\begin{equation}\label{cenes}
\varphi^{(i)}\left(x,x+e^{(j)}\right):=\delta_{i,j}\,, \qquad i,j=1,2\,.
\end{equation}
The decomposition \eqref{dish2} holds in any dimension. For the $d$ dimensional torus the dimension of the
harmonic component is $d$. Starting from $d=3$ the combinatorial structure of the faces in the lattice is more complex
and a discussion of such cases would require much more space. Apart the more complex combinatorial structure there are however
no new phenomena involved and all our results can be formulated also in these cases.
Given a vector field $\phi\in \Lambda^1$ we write
\begin{equation}\label{hooo}
\phi=\phi^\nabla+\phi^\delta+\phi^H
\end{equation}
to denote the unique splitting in the three orthogonal components. This decomposition can be computed as follows. The harmonic part is determined writing $\phi^H=c_1\varphi^{(1)}+c_2\varphi^{(2)}$. The coefficients $c_i$ are determined by
\begin{equation}\label{cco}
c_i=\frac{1}{N^2}\sum_{x\in \mathbb Z^2_N}\phi\left(x,x+e^{(i)}\right)\,.
\end{equation}
To determine the gradient component $\phi^\nabla$ we need to determine a function $h$ for which $\phi^\nabla(x,y)=[\nabla h](x,y)=h(y)-h(x)$. This is done just taking the divergence on both side of \eqref{hooo} obtaining that $h$ solves the discrete Poisson equation $\nabla\cdot\nabla h=\nabla\cdot \phi$. The remaining component $\phi^\delta$ is computed just by difference $\phi^\delta=\phi-\phi^\nabla-\phi^H$.
We refer to \cite{B,GV,L} for respectively an algebraic, a short and a geometric perspective.

\smallskip
Given an oriented edge $e$ or an oriented face $f$ we denote respectively by $\mathfrak e$, $\mathfrak f$ the corresponding un-oriented edge and face. They will be often considered as sets, either of vertices or edges.
We use then naturally the symbols $\mathfrak e^c, \mathfrak f^c$ to denote the complementary sets. Note that for example both $f$ and $-f$ are associated with the same un-oriented face $\mathfrak f$.

\subsection{Markov chains and  particle systems}
We will consider finite and irreducible continuous time Markov chains with transitions rates $r$. If $(V,E)$ is the transition graph, then for $(x,y)\in E$ we have that $r(x,y)>0$ is the rate of jump from $x\in V$ to $y\in V$. There exists an unique invariant measure $\pi$ that is strictly positive and is characterized by the stationarity condition
\begin{equation}\label{stsu}
\sum_{y:(x,y)\in E}\pi(x)r(x,y)=\sum_{y:(y,x)\in E}\pi(y)r(y,x)\,.
\end{equation}
A Markov chain is reversible when it is satisfied detailed balance condition
$$\pi(x)r(x,y)=\pi(y)r(y,x)\,, \qquad \forall (x,y)\in E\,. $$
When the Markov chain is not reversible it is possible to define a time reversed Markov chain with transition rates defined by
$$
r^*(x,y):=\frac{\pi(y)r(y,x)}{\pi(x)}\,.
$$
This process is characterized by the following feature. In the stationary case it gives to a set of trajectories exactly the same probability that the original process gives to the set of time reversed trajectories. In particular the time reversed chain has the same invariant measure $\pi$ of the original process.

\smallskip

A special class of Markov chains that we will consider are stochastic particle systems whose configuration space is the collection of
configurations of indistinguishable particles located on the vertices of a graph $(V,E)$ \cite{Li}. We will denote by
 $\eta$ such a configuration and $\eta_t(x)$ will be the number of particles at site
$x\in V$ at time $t$.  The configuration space is $\Gamma^V$ where $\Gamma\subseteq \mathbb R$ is the set of possible configuration of particles in a single site. For simplicity we will consider the case $\Gamma=\{0,1\}$ that corresponds to the presence of  an exclusion rule. All the constructions can be straightforwardly generalized to more general state spaces. The particles jump randomly on the lattice and consequently the system jumps with rate $r(\eta,\eta')$ from the configuration $\eta$ to the configuration $\eta'$. The transition graph has the set of vertices coinciding with $\Gamma^V$ and edges corresponding to pairs of configurations $(\eta,\eta')$ such that $r(\eta,\eta')>0$. This rate is positive only when $\eta'$ is obtained from $\eta$ by a local modification of the configuration, typically one particle jumps from one site to a nearest neighbor one or one particle is created or annihilated in one single vertex.

Accordingly we introduce the  notation $\eta^x$, $\eta^{x,y}$ and $\eta^{\mathfrak e}$ where $x,y\in V$ and $\mathfrak e\in \mathcal E$. The notation $\eta^x$ is used when $\Gamma=\{0,1\}$ and denotes a configuration of particles obtained changing the value in the single site $x$. The configuration $\eta^{x,y}$ is obtained moving one particle from $x$ to $y$. The configuration $\eta^{\mathfrak e}$ is obtained exchanging the values at the two endpoints of $\mathfrak e\in \mathcal E$.  More precisely we have
\begin{equation}\label{afterj1}
\eta^{x}(z):=\left\{
\begin{array}{ll}
1-\eta(x) & \textrm{if}\ z=x \\
\eta(z) & \textrm{if}\ z\neq x\,,
\end{array}
\right.
\end{equation}
\begin{equation}\label{afterj2}
\eta^{x,y}(z):=\left\{
\begin{array}{ll}
\eta(x)-1 & \textrm{if}\ z=x \\
\eta(y)+1 & \textrm{if}\ z=y\\
\eta(z) & \textrm{if}\ z\neq x,y
\end{array}
\right.
\end{equation}
\begin{equation}\label{afterj3}
\eta^{\mathfrak e}(z):=\left\{
\begin{array}{ll}
\eta(e^-) & \textrm{if}\ z=e^+ \\
\eta(e^+) & \textrm{if}\ z=e^-\\
\eta(z) & \textrm{if}\ z\neq e^\pm\,.
\end{array}
\right.
\end{equation}
In \eqref{afterj3} $e$ is any orientation of $\mathfrak e$.
We call respectively $c_{x}(\eta)$, $c_{x,y}(\eta)$ and $c_{\mathfrak e}(\eta)$ the corresponding rate of transitions. For example we have $c_x(\eta):=r(\eta,\eta^x)$ and similarly for the other cases.

On the discrete torus acts naturally the group of translations. We denote by $\tau_x$ the translation by the element $x\in \mathbb Z^2_N$. The translations act on configurations by $\left[\tau_x\eta\right](z):=\eta(z-x)$ and on functions by $\left[\tau_x g\right](\eta):=g(\tau_{-x}\eta)$. For notational convenience it is useful to define $\tau_{\mathfrak f}$ for an un-oriented face $\mathfrak f$. If the vertices belonging to $\mathfrak f$ are $\{x,x+e^{(1)},x+e^{(2)},x+e^{(1)}+e^{(2)}\}$ then we define $\tau_{\mathfrak f}:=\tau_x$.
Likewise for an un-oriented edge $\mathfrak e=\{x,x+e^{(i)}\}$, $i=1,2$  we define $\tau_{\mathfrak e}:=\tau_x$.

We call $\mathbb Z_N^{*2}$ the dual lattice of $\mathbb Z_N^2$. The vertices of $\mathbb Z_N^{*2}$ are in the centers of the faces of the original lattice $\mathbb Z_N^2$. This means that every face of the original lattice is naturally  associated with a vertex $x^*\in \mathbb Z_N^{*2}$. There is a natural correspondence also between the un-oriented edges $\mathcal E$ and $\mathcal E^*$ of the two lattices since pairs of dual edges cross each other in the center.

Given a configuration $\eta$ and a subset $W\subseteq \mathbb Z^2_N$ we denote by $\eta_W$
the restriction of the configuration to the subset $W$. Given two configurations $\eta,\xi$ and two
subsets $W,W'$ such that $W\cap W'=\emptyset$, we denote by $\eta_W\xi_{W'}$ the configuration of particles on $W\cup W'$ that coincides with $\eta$ on $W$ and coincides with $\xi$ on $W'$. A function $h:\Gamma^{\mathbb Z^2_N}\to \mathbb R$ is called local if there exists a finite subset $W$ (independent of $N$ large enough) such that $h\left(\eta_W\xi_{W^c}\right)=h(\eta)$ for any $\eta,\xi$, where $W^c$ is the complementary set of $W$. The minimal subset $W$ for which this holds is called the domain of dependence of the function $h$ and is denoted by $D(h)$.
When the dependence of the transition rates on the configuration is local we say that the dynamics is \emph{local}.
We will consider only translational covariant models for which the transition mechanism is the same in every point of the lattice.
This is formalized by requiring that there exists local functions $c_0(\eta)$, $c_{0,\pm e^{(i)}}(\eta)$ and $c_{\mathfrak{e}^{(i)}}(\eta)$ such that $c_x(\eta)=\tau_xc_0(\eta)$, $c_{x,x\pm e^{(i)}}(\eta)=\tau_xc_{0,\pm e^{(i)}}(\eta)$ and $c_{\mathfrak e}(\eta)=\tau_{\mathfrak e}c_{\mathfrak{e}^{(i)}}(\eta)$ for an edge of the type $\mathfrak e=\{x,x\pm e^{(i)}\}$.

All the information on the stochastic evolution are encoded in the generator of the dynamics
that act on a function $f$ as
\begin{equation}\label{generaleL}
\mathcal L f(\eta)=\sum_{\eta'}r(\eta,\eta')\left[f(\eta')-f(\eta)\right]\,.
\end{equation}
Given a configuration of particles $\eta\in \{0,1\}^V$ we call $\mathfrak C(\eta)$ the collection of clusters of particles
that is induced on $V$.
A cluster $c\in \mathfrak C(\eta)$ is a subgraph of $(V,\mathcal E)$. Two sites $x,y\in V$ belong to the same cluster $c$ if
$\eta(x)=\eta(y)=1$ and there exists an un-oriented path $(z_0,z_1, \dots ,z_k)$ such that $\eta(z_i)=1$ and
$(z_i,z_{i+1})\in \mathcal E$.

\section{Invariant measures and divergence free flows}\label{imdff}

The problem of determining the invariant measure of a given Markov chain and the problem of construct a divergence free flow on its transition graph are strictly related.
Suppose that we have an irreducible Markov chain with transition rates $r$ and invariant measure $\pi$. Then on the transition graph we can define the flow
\begin{equation}\label{defbas}
Q(x,y):=\pi(x)r(x,y)\,.
\end{equation}
The stationarity condition \eqref{stsu} coincides with the divergence free condition for the flow $Q$. Conversely suppose that we have a divergence free flow $Q$ and a fixed target probability measure $\pi$ strictly positive. We obtain a Markov chain having invariant measure $\pi$ defining the rates as
\begin{equation}\label{qpi}
r(x,y):=\frac{Q(x,y)}{\pi(x)}\,.
\end{equation}
Indeed all the Markov chains having $\pi$ invariant are obtained in this way for a suitable divergence free flow.
Once again the proof follows inserting the rates \eqref{qpi} into \eqref{stsu} and using the divergence free condition of $Q$. The natural interpretation of the flow $Q$ is that it represents the typical flow observed in the stationary state of the chain.

In terms of flows, the connection between a Markov chain and its time reversed is even more transparent. Consider a Markov chain having transition rates $r$ and invariant measure $\pi$ and let $Q$ be defined by \eqref{defbas}. The time reversed flow $Q^*$ is defined simply reversing the original flow $Q$, i.e. we have $Q^*(x,y):=Q(y,x)$. It is clear that $Q^*$ is still a divergence free flow. The rates of the time reversed Markov chain are obtained by \eqref{qpi} but using the reversed flow and the same invariant measure $\pi$: $r^*(x.y)=\frac{Q^*(x,y)}{\pi(x)}$. In particular reversibility corresponds to the symmetry by inversion $Q(x,y)=Q(y,x)$.

This connection between the two problems is important since the geometric structure of divergence free flows is quite well understood and there are simple representations theorems \cite{BFG,GV,MC}. We have indeed that on a finite oriented graph any divergence free flow can be written as a superposition of elementary flows associated with cycles
\begin{equation}\label{Qr}
Q=\sum_{C\in \mathcal C}\rho(C)Q_C\,,
\end{equation}
for suitable positive weights $\rho$. This decomposition is in general not unique and under suitable assumptions on the flow (for example summability) the result can be extended also to infinite graphs \cite{BFG}. For a symmetric flow for which $Q(x,y)=Q(y,x)$ the decomposition \eqref{Qr} can be done using just cycles of length 2 of the form $C=(x,y,x)$.

Using this simple construction it is possible to generate non trivial and interesting non reversible Markov dynamics.
We start with the simplest case that is the reversible one. After we discuss the simplest non reversible
model that is the case of one single particle performing a random walk on a lattice. We then start to discuss some particle models.
Since any Markovian model can be generated by a suitable choice of the cycles we concentrate on a specific class of models. In particular we consider lattice gases having a Gibbsian invariant measure and such that the associated divergence free flow
can be constructed using cycles obtained by local perturbations of particles configurations. The number of cycles on the configuration space is huge and their combinatorial structure is very rich. In a sense we are considering the simplest possible class of models.
Clearly, even restricting to local cycles, it is possible to consider many other models.
At the end we briefly discuss models with local rates and corresponding to global cycles where particles go around all the system. For simplicity we consider stochastic lattice gases but more general state space can be discussed similarly. In this case the combinatorial structure of the cycles is even richer.

We discuss our examples of particles systems following an order that may appear strange. Our motivation is the following. We start discussing the two dimensional Kawasaki dynamics since in this case the link between the combinatorial structure of the cycles in the configuration space and the structure of the physical space is more evident. Every cycle is indeed naturally associated to a two dimensional face of the physical lattice. We then discuss other examples of stochastic lattice gases that should enlightening the general idea. After this we describe
the general structure of our approach based on local perturbations of particles configurations. We end describing some natural models
having a decomposition in cycles of different type.

A very natural and important issue is to understand the relationship between the structure of the cycles
and the behavior of the system, for example the speed of convergence towards the invariant measure.

In the continuous case again the problem of finding invariant measures and the problem
of constructing divergence free currents are strictly related. In this case all the machinery
of the cycles is much more subtle and less natural. However there are several different ways of
generating divergence free currents and not reversible dynamics could be constructed similarly to the discrete case.

\subsection{Metropolis algorithm} The first example is the classic Metropolis algorithm \cite{Metr} that corresponds to the generation
of reversible dynamics.
As we observed the reversible chains have a symmetric typical flow Q. This means that fixing the rates as in \eqref{qpi}
with a symmetric flow $Q$ we obtain a reversible chain with invariant measure $\pi$. For example choosing
$Q(x,y)=\max\{\pi(x),\pi(y)\}$, that is clearly symmetric, gives the classic rates
$$
r(x,y)=\max\left\{1,\frac{\pi(y)}{\pi(x)}\right\}\,.
$$

\subsection{Random walks} Before consider the case of several particles let us shortly recall a construction in \cite{GV} for one single particle. We consider the bidimensional case although the arguments presented below can be generalized to any dimension. Consider
for example  two  nonnegative (i.e. with nonnegative coordinates) vector fields $q^+(x)$ and $q^-(x)$ on the continuous torus $[0,N]\times [0,N]$ with periodic boundary conditions and such that
\begin{equation}
j(x):=q^+(x)-q^-(x)\,,
\label{diffj}
\end{equation}
is a  divergence free vector field.

We define a flow $Q$ on the lattice fixing the values  $Q\left(y,y  \pm e^{(i)}\right)$  as the flux of $q^\pm_i$ across  the edges dual to $\left\{y,y\pm e^{(i)}\right\}$, more precisely
\begin{align}
Q\left(y,y+e^{(1)}\right):= &\int_{0}^{1} q^+_1\left(y+ \frac{e^{(1)}-e^{(2)}}{2} + \alpha e^{(2)}
\right)d\alpha\,,\\
Q(y,y+e^{(2)}):= &\int_{0}^{1}   q^+_2\left(y+ \frac{-e^{(1)}+e^{(2)}}{2} + \alpha e^{(1)}  \right)d\alpha\,,\\
Q(y,y-e^{(1)}):=& \int_{0}^{1}    q^-_1\left(y- \frac{e^{(1)}+e^{(2)}}{2} + \alpha e^{(2)}
\right)d\alpha\,,\\
Q(y,y-e^{(2)}):=& \int_{0}^{1} q^-_2\left(y  - \frac{e^{(1)}+e^{(2)}}{2} + \alpha e^{(1)}
\right)d\alpha\,.
\end{align}
Given $y\in \mathbb Z_N^2$ we have that $\nabla\cdot Q (y)$ coincides with the flow (from inside to outside) of $j$ through the boundary of the box $B_y:=\{z \in \bb Z^2 \,:\, |y-z|_\infty \leq 1/2\}$
$$
\nabla\cdot Q(y)=\int_{\partial B_y}j\cdot \hat n\, d\Sigma=0\,.
$$
The last equality above follows by $\nabla\cdot j=0$ and the  Gauss-Green Theorem.

If we define the rate of transitions of a random walk by \eqref{qpi} we obtain a random walk with invariant measure $\pi$.

\subsection{Two dimensional Kawasaki dynamics }\label{basile}
We consider a system of particles satisfying an exclusion rule, i.e. the configuration on one single site is $\Gamma=\{0,1\}$, and evolving on $\mathbb Z^2_N$. The construction can be naturally generalized to any planar graph. Generalizations to higher dimensions are also natural but require more notation.

Let $\mu$ be a Gibbsian probability measure on the configuration space $\Gamma^{\mathbb Z^2_N}$
given by
\begin{equation}\label{gibbs}
\pi(\eta)=\frac 1Z e^{-H(\eta)}\,,
\end{equation}
where $H$ is the Hamiltonian.
To explain notation we consider the case of an Hamiltonian with one body  and two body interactions
\begin{equation}\label{H}
H(\eta)=- \sum_{\{x,y\}\in \mathcal E}J_{\{x,y\}}\eta(x)\eta(y)-\sum_{x\in \mathbb Z^2_N}\lambda_x\eta(x)\,.
\end{equation}
We consider such an Hamiltonian for simplicity but more general interactions can be handled similarly. Indeed the specific form of the Hamiltonian does not play any role in the following. It is only important that it has interactions of bounded range.
In \eqref{H} the parameters $\left(J_{\{x,y\}}\right)_{\{x,y\}\in \mathcal E}$ and $\left(\lambda_x\right)_{x\in \mathbb Z^2_N}$ are arbitrary real numbers describing respectively the interactions associate to the bonds and the chemical potentials of the sites. In the following we will always restrict to translational invariant Hamiltonian.
Given $W \subseteq \mathbb Z^2_N$ we define the energy
restricted to $W$ as
\begin{equation}\label{en-ris}
H_{W}(\eta):=-\sum_{\{x,y\}\cap W \neq \emptyset}J_{\{x,y\}}\eta(x)\eta(y)-\sum_{x\in W}\lambda_x\eta(x)\,.
\end{equation}
We define also
\begin{equation}\label{en-ris*}
H_{W}^*(\eta):=-\sum_{\{x,y\}\subset W }J_{\{x,y\}}\eta(x)\eta(y)-\sum_{x\in W}\lambda_x\eta(x)\,.
\end{equation}
For any $W$ we have $H=H_W+H^*_{W^c}$.
Given an oriented edge $e\in  E$ of the lattice there is only one anticlockwise oriented face to which
$e$ belongs that we call $f^+(e)$. There is also an unique anticlockwise face, that we call $f^-(e)$, such that $e\in -f^-(e)$
(see Figure 1).
\begin{figure}[]
\centering
\hspace{0.1cm}
\entrymodifiers={+<0.5ex>[o][F*:black]}
\xymatrix@C=1.5cm@R=1.5cm{
 {}\ar@{.}[rrrr]\ar@{.}[dddd] & \ar@{.}[dddd] & {}\ar@{.}[dddd] & {}\ar@{.}[dddd] & {}\ar@{.}[dddd]\\
 {}\ar@{.}[rrrr] & {} & {} & {} & {}\\
 {}\ar@{.}[rrrr]_(.53){\textit{\normalsize y}} & {} & {} & {} & {}\\
 {}\ar@{.}[rrrr]_(.53){\textit{\normalsize x}}^(0.38){\textit{\tiny{$f^{+}(e)$}}} ^(0.63){\textit{\tiny{$f^{-}(e)$}}} & {} \ar@{}[ur]|{\textit{\Huge{$\circlearrowleft$}}} & {} \ar@{->}[u]|{\textit{}} & {} \ar@{}[ul]|{\textit{\Huge{$\circlearrowleft$}}} & {}\\
  {} \ar@{->}[u]^(.5){\textit{\normalsize $e^{(2)}$}}  \ar@{->}[r]_(.5){\textit{\normalsize $e^{(1)}$}} \ar@{.}_(-0.05){\text{\normalsize }}[rrrr] & {} & {} & {} & {}
}
\caption{\small{ The discrete two dimensional torus of side 5; the opposite sides of the square are identified. For the oriented edge $(x,y)=e$ we draw the two anticlockwise oriented faces $f^{-}(e)$ and $f^{+}(e)$. }}
\label{fig: due facce}
\end{figure}
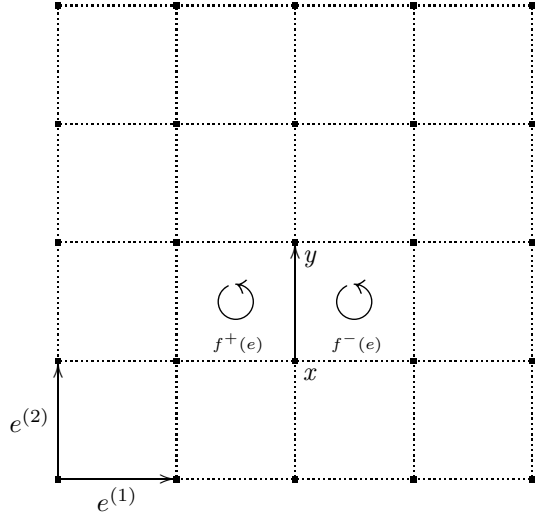

We will show that if we define the rates of jump like in \eqref{iratesgen}, where the two non-negative functions $w^\pm$ are such that $D(w^\pm)\cap \{0,e^{(1)},e^{(2)}, e^{(1)}+e^{(2)}\}=\emptyset$, then we will obtain a dynamics having \eqref{gibbs}
as invariant measure.

Let us introduce before a simplified version of models.
Let $w^+,w^-$ two positive numbers.
We consider the following transition rates for the jump of one particle from $e^-$ to $e^+$
\begin{equation}\label{irates}
c_{e^-,e^+}(\eta):=\eta(e^-)(1-\eta(e^+))\left(w^+e^{H_{\mathfrak f^+(e)}(\eta)}+w^-e^{H_{\mathfrak f^-(e)}(\eta)}\right)\,.
\end{equation}
When we write $H_{\mathfrak f}$ for a suitable face $\mathfrak f$ we consider the face as a set of vertices.
We claim that the generator \eqref{generaleL} with the rates of jump \eqref{irates}  has \eqref{gibbs}
as invariant measure, moreover if $w^-\neq w^+$ then the dynamics is not reversible. Since the Hamiltonian \eqref{H} has only finite range interactions the dynamics induced by \eqref{irates} is local.

First we give a direct proof of this claim and then we show how a generalized version of the rates \eqref{irates} is naturally constructed by divergence free flows on the configuration space.
We need to check the validity of the stationary equations
\begin{equation}\label{staz}
\sum_{e\in E}\Big[\pi(\eta)c_{e^-,e^+}(\eta)-\eta(e^-)(1-\eta(e^+))\pi\left(\eta^{e^-,e^+}\right)c_{e^+,e^-}\left(\eta^{e^-,e^+}\right)\Big]=0\,,\qquad \forall \eta\,.
\end{equation}
Using \eqref{gibbs} and \eqref{irates} we will obtain
in \eqref{staz} a sum of several terms having as a factor
$e^{-H^*_{\mathfrak f^c}(\xi)}$ for some face $\mathfrak f$ and some configuration $\xi$.
Observe that if $\eta$ and $\xi$ are obtained one from the other just changing
the occupation numbers on sites belonging to  $\mathfrak f$ then we have $H^*_{\mathfrak f^c}(\eta)=H^*_{\mathfrak f^c}(\xi).$ This means that all the factors can be written as $e^{-H^*_{\mathfrak f^c}(\eta)}$ for different $\mathfrak f$.
We will use the relationship $f^+(e)=f^-(-e)$ (with $-e$ we denote the edge oriented oppositely with respect to $e$).

Let $\mathcal F$ be the collection of  un-oriented faces. We group together all the terms in \eqref{staz}
that have the energetic factor equal to $e^{-H^*_{\mathfrak f^c}}$  for a given $f\in \mathcal F$.
The sum of all these terms is equal to
\begin{equation}\label{arpa}
\frac{e^{-H^*_{\mathfrak f^c}(\eta)}}{Z}\left\{\sum_{e \in f}\Big[\eta(e^-)(1-\eta(e^+))(w^+-w^-)+\eta(e^+)(1-\eta(e^-))(w^--w^+)\Big]\right\}\,.
\end{equation}
In \eqref{arpa} $f$ is the unique anticlockwise oriented face corresponding to $\mathfrak f$.
The sum appearing in
\eqref{arpa} is zero  since coincides with the telescopic sum
\begin{equation}\label{sz}
(w^+-w^-)\sum_{e\in f}\left[\eta(e^-)-\eta(e^+)\right]=0\,.
\end{equation}
Considering all the faces in $\mathcal F$ we can write \eqref{staz} as
\begin{equation}\label{fdz}
\sum_{\mathfrak f\in \mathcal F}\left\{\frac{e^{-H_{\mathfrak f^c}^*(\eta)}}{Z}(w^+-w^-)\sum_{e\in f}\left[\eta(e^-)-\eta(e^+)\right]\right\}=0\,.
\end{equation}
By \eqref{sz} we have that \eqref{fdz} is  satisfied.
The reversibility condition becomes
\begin{equation}
(w^+-w^-)e^{-H_{\mathfrak f^{-}(e)^c}^*(\eta)}=(w^+-w^-)e^{-H_{\mathfrak f^{+}(e)^c}^*(\eta)},
\end{equation}
that is not satisfied when $w^+\neq w^-$ apart special cases.

\smallskip

We show now how it is possible to conceive a generalized version of \eqref{irates} constructing a divergence free flow like in \eqref{Qr} on the configuration space
and using then \eqref{qpi}. We use cycles for which the configuration of particles is frozen outside a given face $\mathfrak f$.  The cycles that we use are shown in Figure \ref{fig: EP 2d} and correspond to letting the particles rotate around a fixed face according to specific rules and following the two orientations. Let us consider two non-negative functions $w^\pm$ such that
$D(w^\pm)\cap \{0,e^{(1)},e^{(2)}, e^{(1)}+e^{(2)}\}=\emptyset$. If $w^\pm$ are local then the corresponding rates will be also local.

\begin{remark}\label{zeb}
More generally we can assume that $w^\pm$ depends on the configuration of particles restricted to  $\{0,e^{(1)},e^{(2)}, e^{(1)}+e^{(2)}\}$ only through the number of particles
$\eta(0)+\eta(e^{(1)})+\eta(e^{(2)})+\eta(e^{(1)}+e^{(2)})$.
\end{remark}
\begin{figure}[]

\[
\hspace{0.1cm}
\entrymodifiers={+<0.5ex>[o][F*:black]}
\xymatrix@C=1.2cm@R=1.2cm{
  \ar@{.}[r] \ar@{}[dr]|{f}\ar@{}[rrrd]|{\textit{\large{$\rightarrow$}}} &   \ar@{.}[d] &   \ar@{.}[d] \ar@{.}[r]\ar@{}[rrrd]|{\textit{\large{$\rightarrow$}}} \ar@{}[dr]|{f} &  \ar@{.}[l]  &  \ar@{.}[r] \ar@{}[rd]|{f}\ar@{}[rrrd]|{\textit{\large{$\rightarrow$}}} &  *+[o][F*:black]{}\ar@{.}[d] &  *+[o][F*:black]{}\ar@{.}[r] \ar@{.}[d] &  \ar@{.}[d] \ar@{}[ld]|{f}  \\
  *+[o][F*:black]{} \ar@{.}[u]  &  \ar@{.}[l]   &   \ar@{.}[r]  & *+[o][F*:black]{}  \ar@{.}[u] \ar@{.} [u] &  \ar@{.}[u] \ar@{.}[r]  &  &  \ar@{.}[r] &   \ar@{.}[u]\\
  *+[o][F*:black]{}\ar@{.}[r] \ar@{}[dr]|{f} \ar@{}[rrrd]|{\textit{\large{$\rightarrow$}}}&   \ar@{.}[d] &   *+[o][F*:black]{}\ar@{.}[d] \ar@{.}[r] \ar@{}[dr]|{f}\ar@{}[rrrd]|{\textit{\large{$\rightarrow$}}} &  *+[o][F*:black]{}\ar@{.}[l]  &  \ar@{.}[r] \ar@{}[rd]|{f}\ar@{}[rrrd]|{\textit{\large{$\rightarrow$}}} &  *+[o][F*:black]{}\ar@{.}[d] &  \ar@{.}[r] \ar@{.}[d] & *+[o][F*:black]{} \ar@{.}[d] \ar@{}[ld]|{f}  \\
   \ar@{.}[u]  &  *+[o][F*:black]{}\ar@{.}[l] &   \ar@{.}[r]  &  \ar@{.}[u] \ar@{.} [u] &  *+[o][F*:black]{}\ar@{.}[u] \ar@{.}[r]  &  &  \ar@{.}[r] &  *+[o][F*:black]{} \ar@{.}[u]\\
  \ar@{.}[r] \ar@{}[dr]|{f}\ar@{}[rrrd]|{\textit{\large{$\rightarrow$}}} &  *+[o][F*:black]{} \ar@{.}[d] &   *+[o][F*:black]{}\ar@{.}[d] \ar@{.}[r] \ar@{}[dr]|{f}\ar@{}[rrrd]|{\textit{\large{$\rightarrow$}}} &  \ar@{.}[l]  &  *+[o][F*:black]{}\ar@{.}[r] \ar@{}[rd]|{f}\ar@{}[rrrd]|{\textit{\large{$\rightarrow$}}} &  \ar@{.}[d] &  \ar@{.}[r] \ar@{.}[d] &  \ar@{.}[d] \ar@{}[ld]|{f}  \\
   *+[o][F*:black]{}\ar@{.}[u]  &  \ar@{.}[l] & *+[o][F*:black]{}  \ar@{.}[r]  &   \ar@{.}[u] \ar@{.} [u] &  \ar@{.}[u] \ar@{.}[r]  & *+[o][F*:black]{} &  *+[o][F*:black]{}\ar@{.}[r] &   *+[o][F*:black]{}\ar@{.}[u]\\
  *+[o][F*:black]{}\ar@{.}[r] \ar@{}[dr]|{f} \ar@{}[rrrd]|{\textit{\large{$\rightarrow$}}}&   \ar@{.}[d] &  *+[o][F*:black]{} \ar@{.}[d] \ar@{.}[r] \ar@{}[dr]|{f}\ar@{}[rrrd]|{\textit{\large{$\rightarrow$}}} & *+[o][F*:black]{} \ar@{.}[l]  &  *+[o][F*:black]{}\ar@{.}[r] \ar@{}[rd]|{f}\ar@{}[rrrd]|{\textit{\large{$\rightarrow$}}} & *+[o][F*:black]{}  *+[o][F*:black]{}\ar@{.}[d] &  \ar@{.}[r] \ar@{.}[d] & *+[o][F*:black]{} \ar@{.}[d] \ar@{}[ld]|{f}  \\
  *+[o][F*:black]{} \ar@{.}[u]  &  *+[o][F*:black]{}\ar@{.}[l] &  *+[o][F*:black]{} \ar@{.}[r]  &   \ar@{.}[u] \ar@{.} [u] &  \ar@{.}[u] \ar@{.}[r]  & *+[o][F*:black]{} &  *+[o][F*:black]{}\ar@{.}[r] &   *+[o][F*:black]{}\ar@{.}[u]\\
}
\]

\caption{\small{Anticlockwise cycles in $\mathcal C_f$. Each line corresponds to a cycle and the evolution is from left to right. The configuration of particles is frozen outside the face. In the first cycle there is 1 particle rotating in the face, in the second and third 2 and in the last one 3. The clockwise cycles are obtained reading from right to left the Figure. }}
\label{fig: EP 2d}
\end{figure}
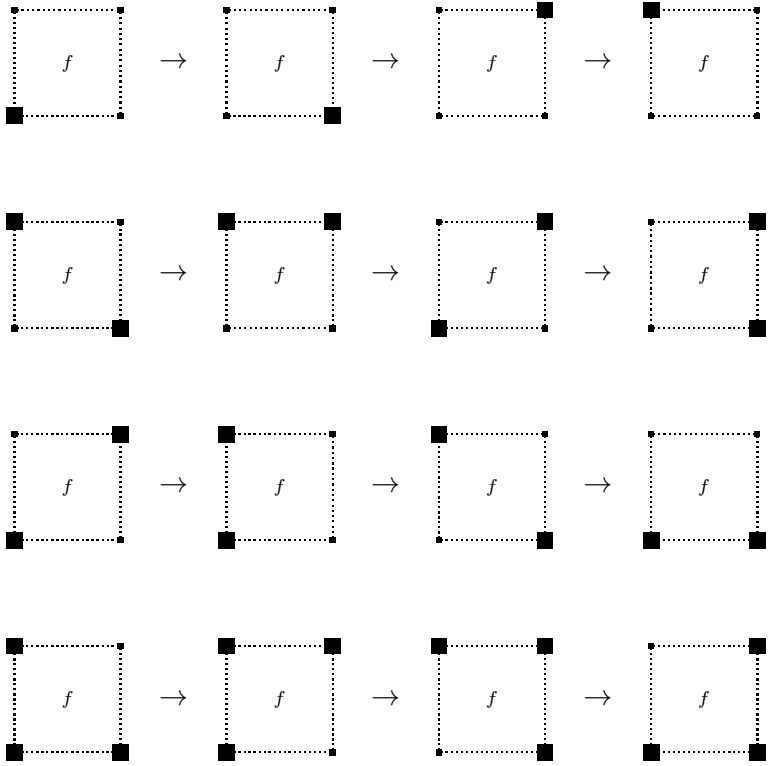
We call $\mathcal C_f$ the collection of cycles in the configuration space associated with the oriented face $f$. In Figure \ref{fig: EP 2d} we draw the structure of these cycles associated with an $f\in F^+$ (we call $F^+$ the collection of anticlockwise oriented faces). For different configurations of the particles outside $\mathfrak f$ we obtain different elements of $\mathcal C_f$.
The weights in \eqref{Qr} are fixed by
\begin{equation}\label{weights cycles kawa 2-d}
\rho(C):=\left\{
\begin{array}{ll}
e^{-H^*_{\mathfrak f^c}}\tau_{\mathfrak f}w^+ & \textrm{if}\  C\in \mathcal C_f\,, f\in F^+\,,\\
e^{-H^*_{\mathfrak f^c}}\tau_{\mathfrak f}w^- & \textrm{if}\  C\in \mathcal C_f\,, f\in F^-\,.
\end{array}
\right.
\end{equation}
A cycle $C\in \mathcal C_f$ is individuated by the face $f$, the type of rotation around the face among the possible ones
in Figure \ref{fig: EP 2d} and the configuration $\eta_{\mathfrak f^c}$ outside the face.
Definition \eqref{weights cycles kawa 2-d} is then well posed since all the functions appearing depend only on $\eta_{\mathfrak f^c}$
and are therefore constant on the cycle $C$.

The value of $Q\left(\eta,\eta^{x,y}\right)$ associated with a jump of one particle from $x$ to $y$ ($e=(x,y)$ is an oriented
edge of the lattice) in the configuration $\eta$ is determined as follows. This value is zero unless the site $x$ is occupied and the site $y$ is empty. This gives a factor $\eta(x)(1-\eta(y))$. If this constraint is satisfied then there is a contribution corresponding to $e^{-H^*_{\mathfrak f^+(e)^c}(\eta)}\tau_{\mathfrak f^+(e)}w^+(\eta)$ from a cycle with anticlockwise rotations and a contribution corresponding to
$e^{-H^*_{\mathfrak f^-(e)^c}(\eta)}\tau_{\mathfrak f^-(e)}w^-(\eta)$ from a cycle with clockwise rotations. Applying formula \eqref{qpi}
we obtain the following generalized version of the rates \eqref{irates}
\begin{equation}\label{iratesgen}
c_{e^-,e^+}(\eta)=\eta(e^-)(1-\eta(e^+))
\left(e^{H_{\mathfrak f^+(e)}(\eta)}\tau_{\mathfrak f^+(e)}w^+(\eta)+e^{H_{\mathfrak f^-(e)}(\eta)}\tau_{\mathfrak f^-(e)}w^-(\eta)\right)\,.
\end{equation}
The exponential factors in \eqref{weights cycles kawa 2-d} have been chosen in such a way that applying \eqref{qpi}
the non locality of the measure $\pi$ is erased and we obtain local rates.
To these rates it is always possible to add some reversible rates coming from cycles of length 2 of the
form $C=(\eta,\eta^{x,y},\eta)$. Since this happens also in one dimension we discuss this issue in the next section.

If $\pi$ is a Bernoulli measure then $e^{H_{\mathfrak f^\pm(e)}}$ depends only on the number of particles in the face $\mathfrak f^\pm(e)$. This dependence can be compensated by $w^\pm$ by Remark \ref{zeb}.

\subsection{One dimensional Kawasaki dynamics}\label{odkd}
We consider now a conservative dynamics on a ring with $N$ sites.
We search for local rates having (\ref{gibbs}) as  invariant  measure. Defining the set
$W(\{x,x+1\}):=\{x-1,x,x+1,x+2\}$, we will obtain the following rates
\begin{align}\label{euna}
&c_{x,x+1}(\eta)=\eta(x)(1-\eta(x+1))\Big\{e^{H_{W(\{x-1,x\})}}\Big[\eta(x-1)(1-\eta(x-2))\tau_{x-1}w^+ \nonumber\\
&+\eta(x-2)(1-\eta(x-1))\tau_{x-1}w^-\Big]+e^{H_{W(\{x+1,x+2\})}}\Big[\eta(x+2)(1-\eta(x+3))\tau_{x+1}w^+\\
&+\eta(x+3)(1-\eta(x+2))\tau_{x+1}w^-\Big]+e^{H_{\{x,x+1\}}}\tau_x w\Big\}\,,\nonumber
\end{align}
and
\begin{align}\label{edue}
& c_{x+1,x}(\eta)=\eta(x+1)(1-\eta(x))\Big\{e^{H_{W(\{x-1,x\})}}\Big[\eta(x-2)(1-\eta(x-1))\tau_{x-1}w^+ \nonumber\\
& +\eta(x-1)(1-\eta(x-2))\tau_{x-1}w^-\Big]+e^{H_{W(\{x+1,x+2\})}}\Big[\eta(x+3)(1-\eta(x+2))\tau_{x+1}w^+\\
&+\eta(x+2)(1-\eta(x+3))\tau_{x+1}w^-\Big]+e^{H_{\{x,x+1\}}}\tau_x w\Big\}\,,\nonumber
\end{align}
where $w, w^\pm$ are arbitrary non-negative functions such that $D(w)\cap \{0,1\}=\emptyset$ and $D(w^\pm)\cap W(\{0,1\})=\emptyset$.

The class of cycles that we consider contains all the cycles in which one single particle jumps across an edge and then come back. The length of these cycles is 2 and any superposition of them is generating a symmetric flow. The family of such cycles, when one single particle jumps across the edge $\mathfrak e$, is denoted by $\mathcal C_{\mathfrak e}^r$.

The most simple way to  introduce irreversibility  is to consider  cycles associated with two particles evolving.
The structure of the cycles associated with the movement of two particles that we consider is
illustrated in Figure \ref{fig: due v- due p Kaw1d}. Again the configuration outside the window drawn is frozen. Moving from the top to the bottom of the Figure we have a cycle in the collection $\mathcal C^+_{\mathfrak e}$. Moving instead from the bottom to the top we have a cycle in the collection $\mathcal C_{\mathfrak e}^-$. The lower index $\mathfrak e$ denotes the edge around which the evolution takes place while instead the upper index denotes the direction of movement. Observe that on cycles $\mathcal C_{\mathfrak e}^\pm$ particles are not jumping across the edge $\mathfrak e$ but
instead across $\tau_{\pm 1}\mathfrak e$.
\begin{figure}[]
\[
\entrymodifiers={+<0.5ex>[o][F*:black]}
\xymatrix@C=1.5cm@R=1.5cm{
*+[o][F*:black]{}
\ar@{.}[rrr]^{\textit{\normalsize $\mathfrak e$}} &\ar@{}[dr]|{\textit{\large{$\downarrow$}}}&*+[o][F*:black]{}&\\
\ar@{.}[rrr]^{\textit{\normalsize $\mathfrak e$}}&*+[o][F*:black]{}\ar@{}[dr]|{\textit{\large{$\downarrow$}}}&*+[o][F*:black]{}&\\
\ar@{.}[rrr]^{\textit{\normalsize $\mathfrak e$}}&*+[o][F*:black]{}\ar@{}[dr]|{\textit{\large{$\downarrow$}}}&&*+[o][F*:black]{}\\
*+[o][F*:black]{}\ar@{.}[rrr]^{\textit{\normalsize $\mathfrak e$}}&&&*+[o][F*:black]{}\\
} \]
\caption{\small{Following the arrows from top to bottom we obtain a cycle in the collection $\mathcal C_{\mathfrak e}^+$. Recall that the configuration is frozen outside the portion of the lattice drawn. Going in the opposite direction from bottom to top
we obtain a cycle in $\mathcal C_{\mathfrak e}^-$.}}
\label{fig: due v- due p Kaw1d}
\end{figure}
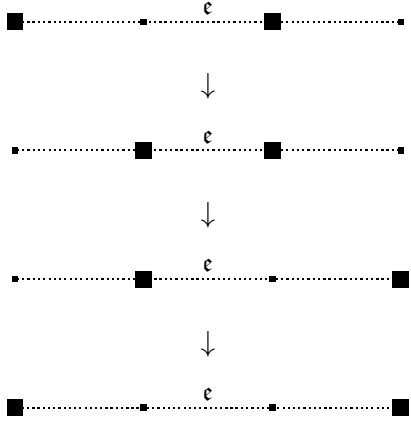
We call $W(\mathfrak e):=\tau_{-1}\mathfrak e\cup \mathfrak e\cup \tau_1\mathfrak  e$, i.e. the set of vertices belonging to one of the edges considered. We consider non negative functions $w, w^\pm$ such that $D(w)\cap \{0,1\}=\emptyset$ and $D(w^\pm)\cap W(\{0,1\})=\emptyset$. We construct a divergence free flow with a decomposition \eqref{Qr} with weights defined by
\begin{equation}\label{weights cycles kawa 1-d}
\rho(C):=\left\{
\begin{array}{ll}
e^{-H^*_{\mathfrak e^c}}\tau_{\mathfrak e} w & \textrm{if}\  C\in \mathcal C_{\mathfrak e}^r \,, \\
e^{-H^*_{W^c(\mathfrak e)}}\tau_{\mathfrak e} w^+ & \textrm{if}\ C\in \mathcal C_{\mathfrak e}^+\,,\\
e^{-H^*_{W^c(\mathfrak e)}}\tau_{\mathfrak e} w^- & \textrm{if}\ C\in \mathcal C_{\mathfrak e}^-\,.
\end{array}
\right.
\end{equation}
Definition \eqref{weights cycles kawa 1-d} is not ambiguous since the functions used are constant on the cycles and can be interpreted as functions of the cycle itself.
The flow is decomposed into a reversible part coming from the superposition of the reversible cycles and an irreversible one
$Q=Q^r+Q^i$.
The reversible part in the case of a jump of a particle from $x$ to $x+1$ is for example
\begin{equation}\label{qr}
Q^r(\eta,\eta^{x,x+1})=\tau_xw e^{-H^*_{\{x,x+1\}^c}}\eta(x)(1-\eta(x+1))\,,
\end{equation}
and similarly for a jump from $x+1$ to $x$.
Since any symmetric flow can be decomposed using these elementary cycles we obtain the most general form of the reversible rates having \eqref{gibbs} has an invariant measure
\begin{equation}\label{genrev}
c^r_{e^-,e^+}(\eta)=\tau_{\mathfrak e}w e^{H_{\mathfrak e}}\eta\left(e^-\right)\left(1-\eta\left(e^+\right)\right)\,.
\end{equation}

For the irreversible part $Q^i(\eta,\eta^{x.x+1})$  we may have a contribution from one cycle in $\mathcal C^\pm_{\{x-1.x\}}$ and one cycle in $\mathcal C^\pm_{\{x+1.x+2\}}$. Let us call $\chi_i\,, i=1,\dots , 4$ the characteristic functions associated with the local distribution of particles like in Figure \ref{fig: due v- due p Kaw1d} when $\mathfrak e=\{0,1\}$ and numbering from the top of the Figure toward the bottom. For example we have
$$
\chi_1(\eta)=\eta(-1)(1-\eta(0))\eta(1)(1-\eta(2))\,,
$$
for the first configuration from the top in Figure 3 and similar expressions for the other cases. We have
\begin{eqnarray}\label{qi0q}
Q^i(\eta,\eta^{x,x+1})&=&\tau_{x-1}\left(w^+\chi_2e^{-H^*_{W^c(\{0,1\})}}\right)+\tau_{x+1}\left(w^+\chi_1
e^{-H^*_{W^c(\{0,1\})}}\right)
\nonumber \\
&+&\tau_{x-1}\left(w^-\chi_1e^{-H^*_{W^c(\{0,1\})}}\right)+\tau_{x+1}\left(w^-\chi_4
e^{-H^*_{W^c(\{0,1\})}}\right)\,.
\end{eqnarray}
Likewise for jumps in the opposite direction we have
\begin{eqnarray}\label{qi0-q}
Q^i(\eta,\eta^{x+1,x})&=&\tau_{x-1}\left(w^+\chi_4e^{-H^*_{W^c(\{0,1\})}}\right)+\tau_{x+1}\left(w^+\chi_3
e^{-H^*_{W^c(\{0,1\})}}\right)
\nonumber \\
&+&\tau_{x-1}\left(w^-\chi_3e^{-H^*_{W^c(\{0,1\})}}\right)+\tau_{x+1}\left(w^-\chi_2
e^{-H^*_{W^c(\{0,1\})}}\right)\,.
\end{eqnarray}
Using the general rule \eqref{qpi} we obtain the rates of transition for the irreversible part
\begin{equation}\label{qi0}
c^i_{x,x+1}(\eta)=
\tau_{x-1}\Big[\left(w^+\chi_2+w^-\chi_1\right)e^{H_{W(\{0,1\})}}\Big]+\tau_{x+1}\Big[\left(w^+\chi_1+w^-\chi_4\right)
e^{H_{W(\{0,1\})}}\Big]\,.
\end{equation}
Likewise for jumps in the opposite direction we have
\begin{equation}\label{qi0-}
c^i_{x+1,x}(\eta)=\tau_{x-1}\Big[\left(w^+\chi_4+w^-\chi_3\right)e^{H_{W(\{0,1\})}}\Big]+\tau_{x+1}\Big[\left(w^+\chi_3+w^-\chi_2\right)
e^{H_{W(\{0,1\})}}\Big]\,.
\end{equation}
With some algebra putting all together we obtain the rates \eqref{euna},\eqref{edue}.

\subsection{Glauber dynamics}
We discuss in detail the one dimensional case. The higher dimensional cases can be discussed very similarly.
It is convenient to write the rates as
\begin{equation}\label{GG}
c_x(\eta)=\eta(x)c^-_x(\eta)+(1-\eta(x))c^+_x(\eta)\,.
\end{equation}
Decomposition \eqref{GG} identifies uniquely $c_x^\pm$ only if we require $D(c^\pm_x)\cap \{x\}=\emptyset$.

We search for rates having \eqref{gibbs} as invariant measure. We obtain that $c_x^\pm$ can be written as a sum of an irreversible part $c_x^{\pm,i}$ and of a reversible one $c_x^{\pm, r}$.
We have for the reversible part
\begin{equation}\label{arrit}
c_x^{r}=\tau_x\left(w e^{H_0}\right)\,,
\end{equation}
where $w$ is any non-negative function
such that $D(w)\cap \{0\}=\emptyset$. This is the form of the reversible rates in full generality.
The Bernoulli case \cite{GJLV} is recovered as a special case.
For the irreversible part we have to distinguish when $\eta(x)=1$ and when $\eta(x)=0$ using respectively
the following formulas \eqref{g10} and \eqref{g01}. We obtain
\begin{eqnarray}\label{g01r}
c_x^{\pm,i}=&=&\tau_x\Big[e^{H_{\{0,1\}}}\left(w_\mp\eta(1)+w_\pm(1-\eta(1))\right)\Big]\nonumber \\
&+& \tau_{x-1}\Big[e^{H_{\{0,1\}}}\left(w_\pm\eta(0)+w_\mp(1-\eta(0))\right)\Big]\,,
\end{eqnarray}
where $w^\pm$ are non-negative functions
such that $D(w^\pm)\cap \{0,1\}=\emptyset$.

\begin{remark}
If we fix the invariant measure as a Bernoulli measure of parameter $p$  the rates always satisfy the condition of macroscopic reversibility in \cite{GJLV}
\begin{equation} \label{macroscopic rev cond}
\frac{\bb{E}_\lambda \left( c^-_x \right)}{\bb{E}_\lambda\left( c^+_x\right)}=\frac{1-p}{p}\,, \qquad \forall \lambda\in[0,1]\,,
\end{equation}
for the hydrodynamic scaling limit of a Glauber+Kawasaki dynamics. In \eqref{macroscopic rev cond} $\mathbb E_\lambda$ denotes the expected value with respect to a Bernoulli measure of parameter $\lambda$ and $c^\pm_x$ are the one uniquely identified by \eqref{GG}
and the condition on the domain.
\end{remark}

The proof is as follows.
The reversible cycles of length 2 are of the type $C=(\eta,\eta^x,\eta)$.
We consider also minimal irreversible cycles involving two neighbouring sites $x$ and $x+1$ (see Figure \ref{fig: cicli Gl1-d}).
\begin{figure}[]
\[
\entrymodifiers={+<0.5ex>[o][F*:black]}
\xymatrix@C=1.5cm@R=1.cm{
*+[o][F*:black]{}\ar@{.}[r]^{\textit{\normalsize $\mathfrak e$}}& \ar@{}[r]|{\textit{\large{$\rightarrow$}}} &*+[o][F*:black]{}\ar@{.}[r]^{\textit{\normalsize $\mathfrak e$}}&*+[o][F*:black]{} \ar@{}[r]|{\textit{\large{$\rightarrow$}}} &
  \ar@{.}[r]^{\textit{\normalsize $\mathfrak e$}}&*+[o][F*:black]{} \ar@{}[r]|{\textit{\large{$\rightarrow$}}} &
\ar@{.}[r]^{\textit{\normalsize $\mathfrak e$}}&}
 \]
\caption{\small{An elementary irreversible cycle associated with the bond $\mathfrak e=\{x,x+1\}$. The configuration outside the window is frozen. The sequence following the arrows from the left to the right define a cycle in $\mathcal C^+_{\mathfrak e}$ and a cycle in $\mathcal C_{\mathfrak e}^-$ is generated going in the opposite direction.}}
\label{fig: cicli Gl1-d}
\end{figure}
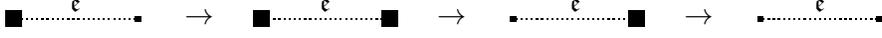
We fix the weights associated with the reversible cycles as $\rho(C):=e^{-H^*_{x^c}}\tau_xw$, where $w$ is any non-negative function
such that $D(w)\cap \{0\}=\emptyset$.
To fix the weights associated with the irreversible cycles consider $w^\pm$ non-negative functions
such that $D(w^\pm)\cap \{0,1\}=\emptyset$. We set
\begin{equation}\label{weights cycles Glauber 1-d}
\rho(C):=\left\{
\begin{array}{ll}
e^{-H^*_{\mathfrak e^c}}\tau_{\mathfrak e}w^+ & \textrm{if} \ C\in \mathcal C^+_{\mathfrak e}\,,\\
e^{-H^*_{\mathfrak e^c}}\tau_{\mathfrak e}w^- & \textrm{if} \ C\in \mathcal C^-_{\mathfrak e}\,.
\end{array}
\right.
\end{equation}
To the value of $Q(\eta,\eta^x)$ may contribute 4 irreversible cycles and a reversible one.
Two irreversible cycles are associated with the bond $\{x-1,x\}$ and two to the bond $\{x,x+1\}$.
When $\eta(x)=1$ we obtain
\begin{eqnarray}\label{g10}
Q^i(\eta,\eta^x)&=&\tau_x\left[e^{-H^*_{\{0,1\}^c}}\left(w^+\eta(1)+w^-(1-\eta(1)\right)\right]\nonumber \\
&+& \tau_{x-1}\left[e^{-H^*_{\{0,1\}^c}}\left(w^-\eta(0)+w^+(1-\eta(0)\right)\right]\,.
\end{eqnarray}
When $\eta(x)=0$ we obtain
\begin{eqnarray}\label{g01}
Q^i(\eta,\eta^x)&=&\tau_x\left[e^{-H^*_{\{0,1\}^c}}\left(w^-\eta(1)+w^+(1-\eta(1)\right)\right]\nonumber \\
&+& \tau_{x-1}\left[e^{-H^*_{\{0,1\}^c}}\left(w^+\eta(0)+w^-(1-\eta(0)\right)\right]\,.
\end{eqnarray}
For the reversible part we have in any case
\begin{equation}\label{gr}
Q^r(\eta,\eta^x)=\tau_x\left(we^{-H^*_{0^c}}\right)\,.
\end{equation}
Using \eqref{qpi} we obtain the rates introduced at the beginning.
A similar construction can be done also in dimension $d>1$.

\subsection{The general structure}
The constructions developed for some specific models can be summarized in a general form as follows.
Let $W\subseteq \mathbb Z_N^2$ be a finite region of the lattice. Consider $C$ a finite cycle on the configurations space restricted
to the finite region. This means a graph having vertices $\Gamma^W$ and edges between pairs of configurations that can be obtained one from the other according to a local modification depending on the type of dynamics we are considering, for example by a jump of one particle from one site to a nearest neighbor. We have $C=(\eta^{(1)}_W,\dots,\eta^{(k)}_W)$ where we write explicitly that the configurations are restricted to $W$. We can associate to the single cycle $C$ several cycles on the full configuration space $\Gamma^{\mathbb Z_N^2}$. In particular for any configuration of particles $\xi_{W^c}$ outside of the region $W$, we have the cycle
$C[\xi]=\big(\xi_{W^c}\eta^{(1)}_W,\dots , \xi_{W^c}\eta^{(k)}_W\big)$. Since the cycle $C[\xi]$ is labeled by the configuration $\xi_{W^c}$ we can associate a weight $\rho(C)=g(\xi)$ where $D(g)\cap W=\emptyset$.
Starting from these cycles we can generate also other cycles by translations
$\tau_xC[\xi]=\big(\tau_x\left[\xi_{W^c}\eta^{(1)}_W\right],\dots , \tau_x\left[\xi_{W^c}\eta^{(k)}_W\right]\big)$ and to have translational
covariant models we give the weights in such a way that $g(\xi)=\rho(C[\xi])=\rho(\tau_xC[\xi])$. We obtain a divergence free flow as a superposition of all these cycles
\begin{equation}\label{acor}
Q=\sum_{x\in \mathbb Z_N^2}\sum_{\xi_{W^c}}g(\xi)Q_{\tau_xC[\xi]}\,.
\end{equation}
More generally we can consider a collection $C_1,\dots ,C_l$ of finite cycles on configurations restricted to finite regions $W_1,\dots ,W_l$ with weights associated with functions $g_1,\dots,g_l$ with suitable domains. Accordingly in \eqref{acor} there will be also a sum over the possible types of cycles. We consider for simplicity the case of one single cycle like in \eqref{acor}. Suppose that we need to determine the value of $Q(\eta,\eta^{x,x+e^{(1)}})$ corresponding to a jump of one particle from $x$ to $x+e^{(1)}$ in the configuration $\eta$. Let $\chi_1,\dots ,\chi_k$ be the characteristic functions associated with the configurations restricted to $W$ along the cycle, i.e.
$$
\chi_i(\eta):=\left\{
\begin{array}{ll}
1 & \textrm{if}\ \eta_W=\eta^{(i)}_W\,, \\
0 &  \textrm{otherwise}\,.
\end{array}
\right.
$$
Suppose that along the cycle $C$ there are jumps of particles along the direction $e^{(1)}$ at times $i_1,\dots ,i_n$ in the positions
$x_{1},\dots,x_{n}\in W$. This means that $\eta_W^{\left(i_{m}+1\right)}=\left(\eta_W^{\left(i_m\right)}\right)^{x_m,x_m+e^{(1)}}$, $m=1,\dots ,n$. We then have
\begin{equation}\label{prto}
Q(\eta,\eta^{x,x+e^{(1)}})=\sum_{m=1}^n\tau_{x-x_{m}}\left[\chi_{i_m}(\eta)g(\eta)\right]\,.
\end{equation}
Having a fixed invariant measure $\pi$, the rates can be determined using \eqref{qpi}. The functions $g$ can be chosen in such a way that the rates are local.

\subsection{Examples with global cycles}\label{ewgc}

There are simple and natural models that cannot be constructed using local cycles. This is immediately clear for example for the totally asymmetric simple exclusion process (TASEP) on a ring, since particles can move only in one direction. Consider for example the case when particles can move only anticlockwise. In this case an elementary cycle is obtained as follows. Starting from each configuration move the particles one by one anticlockwise in such a way that each particle will occupy at the end the position
of the first particle in front of it in the anticlockwise direction. Note that there are many possible cycles of this type depending on the order on which particles are moved. Across each edge of the ring there will be in any case only one particle jumping. It is not difficult to see by \eqref{qpi} that using these cycles it is possible to construct TASEP with Bernoulli invariant measures.

Another example of this type is  a continuous time version of the irreversible Glauber
dynamics in \cite{PSS} for spins taking values $\pm 1$ on a ring. In this case elementary cycles can be naturally constructed in this way. We consider a given configuration and we flip the spins one by one starting from one single site and moving anticlockwise.
Going twice around the ring  we come back to the original configuration.

Both models can be understood also by the following symmetry argument.
Suppose that on a graph $(V,E)$ we have a flow $Q$ for which it is possible to find a bijection such that to any $(x,y)\in E$ we can associate a $(y',x)\in E$ and $Q(x,y)=Q(y',x)$. In this case the flow $Q$ is automatically divergence free since on each vertex the outgoing flow coincides with the incoming one. In the case of the TASEP from each configuration $\eta$, in the configuration space, the number of arrows exiting is equal to the number of arrows entering and coincides with the number of clusters on $\eta$. Giving the same weight to all the arrows corresponding to transitions for configurations having the same number of particles the bijection is automatically constructed. A similar more tricky construction can be done also for the model in \cite{PSS}.

\section{A functional discrete Hodge decomposition}
\label{fdhhd}
We discuss in this section our second discrete geometric construction. This is inspired by a construction in the theory of hydrodynamic limits that we briefly outline \cite{KL99,Spohn}.
The instantaneous current for a stochastic lattice gas evolving with a conservative dynamics is defined by
\begin{equation}\label{istcur}
j_\eta(x,y):=c_{x,y}(\eta)-c_{y,x}(\eta)\,.
\end{equation}
For each fixed configuration $\eta$ this is a discrete vector field. The intuitive interpretation of the instantaneous current is the rate at which particles cross the bond $(x,y)$. Let $\mathcal N_{x,y}(t)$ be the number of particles that jumped from site $x$ to site $y$ up to time $t$. The current flowed across the bond $(x,y)$ up to time $t$ is defined as
\begin{equation}\label{currver}
J_{x,y}(t):=\mathcal N_{x,y}(t)-\mathcal N_{y,x}(t)\,.
\end{equation}
This is a discrete vector field depending on the trajectory of the system of particles. The importance of the instantaneous current
is based on the key observation (see for example \cite{Spohn} section II 2.3) that
\begin{equation}\label{mart}
J_{x,y}(t)-\int_0^tj_{\eta(s)}(x,y)ds
\end{equation}
is a martingale. This is the main reason why the instantaneous current is relevant in understanding
the collective behavior of particle systems. We will not use this fact in the following.

A relevant notion in the derivation of the hydrodynamic behavior for diffusive particle systems is the definition of gradient particle system. The basic definition is the following. A particle system is called of \emph{gradient type} if there exists a local function $h$ such that
\begin{equation}\label{grgr}
j_\eta(x,y)=\tau_yh(\eta)-\tau_x h(\eta)\,.
\end{equation}
The relevance of this notion is on the fact that the proof of the hydrodynamic limit for gradient systems is extremely simplified.
Moreover for gradient and reversible models it is possible to obtain explicit expressions of the transport coefficients.

We develop a generalization of the Hodge decomposition discussed in Section \ref{disch}. This applies to discrete vector fields suitably depending  on the configurations of particles and vector fields of the form \eqref{grgr} play the role of the gradient vector fields. Circulations will also be suitably defined. We prove a functional Hodge decomposition of
translational covariant discrete vector fields. This means vector fields depending on the configuration $\eta$ and such that
\begin{equation}\label{tc}
j_\eta(x,y)=j_{\tau_z\eta}(x+z,y+z)\,.
\end{equation}
The instantaneous current of any translational covariant stochastic lattice gas is translational covariant.
Note however that our results hold for any translational covariant discrete vector fields. We will apply
this decomposition in Section \ref{SeO} to different vector fields suitably associated to stochastic lattice gases.

\subsection{The one dimensional case}

We consider in this section the one dimensional torus with $N$ sites $\mathbb Z_N:=\mathbb Z/(N\mathbb Z)$ that we represent as a ring
such that we move anticlockwise going from $x$ to $x+1$. We have the following Theorem.

\begin{theorem}\label{belteo}
Let $j$ be a translational covariant discrete vector field. Then there exists a function $h$ and a translational invariant function $C$ such that
\begin{equation}\label{imbr}
j_\eta(x,x+1)=\tau_{x+1}h(\eta)-\tau_{x}h(\eta)+C(\eta)\,.
\end{equation}
The function $C$ is uniquely identified and coincides with
\begin{equation}\label{prim}
C(\eta)=\frac 1N\sum_{x\in \mathbb Z_N} j_\eta(x,x+1)\,.
\end{equation}
The function $h$ is uniquely identified up to an arbitrary additive translational invariant function and coincides with
\begin{equation}\label{formulah}
h(\eta)=\sum_{x=1}^{N-1}\frac{x}{N}j_\eta(x,x+1)\,.
\end{equation}
\end{theorem}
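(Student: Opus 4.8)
The plan is to fix a configuration $\eta$ and study the periodic sequence $a_k := j_\eta(k,k+1)$, $k\in\mathbb Z_N$, which is nothing but a scalar discrete vector field on the ring. The one dimensional analogue of the decomposition \eqref{dish2} reads $\Lambda^1=\nabla\Lambda^0\oplus\Lambda^1_H$ with $\dim\Lambda^1_H=1$, the harmonic component being the constant equal to the mean $\frac1N\sum_k a_k$. This already dictates the shape of the answer: a gradient piece of zero total sum around the ring, plus the average. The whole content of the theorem is to upgrade this pointwise-in-$\eta$ statement to a functional one, realizing the scalar potential and the harmonic constant as genuine functions $h$ and $C$ on configuration space transforming correctly under translations.

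First I would dispose of $C$ together with its uniqueness. Summing \eqref{imbr} over $x\in\mathbb Z_N$, the terms $\tau_{x+1}h-\tau_x h$ telescope to zero on the ring, so necessarily $NC(\eta)=\sum_x j_\eta(x,x+1)$, which is exactly \eqref{prim}; this simultaneously yields the correct constant and shows there is no other choice. Translational invariance of $C$ is then immediate from covariance \eqref{tc}: writing $j_{\tau_z\eta}(x,x+1)=j_\eta(x-z,x+1-z)$ and reindexing the sum gives $C(\tau_z\eta)=C(\eta)$.

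The core of the argument is to verify that the explicit $h$ of \eqref{formulah} reproduces \eqref{imbr}. Using \eqref{tc} in the form $j_{\tau_{-x}\eta}(y,y+1)=j_\eta(y+x,y+x+1)$, I would compute $\tau_x h(\eta)=h(\tau_{-x}\eta)=\sum_{y=1}^{N-1}\frac yN a_{y+x}$, and likewise $\tau_{x+1}h(\eta)=\sum_{y=1}^{N-1}\frac yN a_{y+x+1}$. The difference $\tau_{x+1}h-\tau_x h=\sum_{y=1}^{N-1}\frac yN\left(a_{y+x+1}-a_{y+x}\right)$ is then treated by summation by parts; the weights $y/N$ are chosen precisely so that the boundary term invokes the periodicity $a_{x+N}=a_x$, and after collecting the remaining terms one is left with $a_x-\frac1N\sum_k a_k=a_x-C(\eta)$, which is exactly \eqref{imbr}. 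This Abel summation, with its careful bookkeeping of the periodic boundary contribution and the reindexing forced by covariance, is the step I expect to be the main obstacle; everything else is formal.

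Finally, for the uniqueness of $h$: given two representations sharing the same (already unique) $C$, their difference $g:=h-h'$ satisfies $\tau_{x+1}g=\tau_x g$ for every $x$, that is $g(\tau_{-1}\eta')=g(\eta')$ for all $\eta'$. Since $\tau_{-1}$ generates the cyclic translation group of $\mathbb Z_N$, the function $g$ must be translational invariant, which is precisely the asserted indeterminacy.
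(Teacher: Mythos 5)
Your proof is correct, but it takes the route that the paper only mentions in passing and then sets aside. You verify the explicit formula \eqref{formulah} directly: after rewriting $\tau_x h(\eta)=\sum_{y=1}^{N-1}\tfrac{y}{N}\,j_\eta(y+x,y+x+1)$ via covariance \eqref{tc}, an Abel summation with the weights $y/N$ produces $j_\eta(x,x+1)-C(\eta)$, the periodic boundary term $a_{x+N}=a_x$ supplying exactly the non-averaged piece. I checked the bookkeeping and it closes: the difference equals $a_{x+N}-\tfrac1N\sum_k a_k$. This is precisely the "short proof" the paper acknowledges in its opening sentence ("can be obtained simply inserting \eqref{prim} and \eqref{formulah} into \eqref{imbr}") but deliberately does not carry out. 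The paper instead gives a constructive argument: for each base point $x$ it builds a potential $U_x(\eta,\cdot)$ of the divergence-free field $j_\eta-C(\eta)$, averages over all base points to obtain $\overline U$, and uses the symmetry $U_x(\eta,y)=U_{x+z}(\tau_z\eta,y+z)$ to show $\overline U(\eta,y)=\tau_y h(\eta)$ for $h(\eta):=\overline U(\eta,0)$. The trade-off is clear: your verification is shorter and self-contained but presupposes the formula for $h$; the paper's averaging argument explains where \eqref{formulah} comes from and is the template reused verbatim for the gradient and circulation components in the two-dimensional Theorem \ref{belteo2}, which is presumably why the authors feature it. Your treatment of $C$ (telescoping forces \eqref{prim}, covariance gives invariance) and of the uniqueness of $h$ (the difference is invariant under the generator $\tau_{-1}$ of the cyclic group) coincides with the paper's.
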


\begin{proof}
A short proof (apart the uniqueness statements that have to be proved separately like showed at the end) can be obtained simply inserting \eqref{prim} and \eqref{formulah} into \eqref{imbr}. We give a proof that shows how it is
possible to guess the explicit expression \eqref{formulah} and gives some intuition. The basic idea is to construct several potentials for $j_\eta-C(\eta)$ one for each lattice site of the ring and then to consider the spatial average of all of them. With a suitable
use of the translational covariance we obtain in this way the function $h$.

First of all summing over $x\in \mathbb Z_N$ both sides of \eqref{imbr} we obtain that $C$ is uniquely determined by \eqref{prim}.
By \eqref{tc} we have that $C$ is invariant by translations
$$
C(\tau_z\eta)=\frac 1N\sum_{x=1}^Nj_{\tau_z\eta}(x,x+1)=\frac 1N\sum_{x=1}^Nj_\eta(x-z,x-z+1)=C(\eta)\,.
$$
We define a function $U:\mathbb Z_N\times \Gamma^{\mathbb Z_N}\times \mathbb Z_N\to \mathbb R^+$ that associate to the triple $(x,\eta,y)$ the value $U_x(\eta,y)$ defined by $U_x(\eta,x):=0$ and for $y\neq x$ by
\begin{equation}\label{sete}
U_x(\eta,y):=\sum_{z\in [x,y-1]}\left[j_\eta(z,z+1)-C(\eta)\right]\,,
\end{equation}
where $[x,y-1]$ is the collection of sites encountered going from $x$ to $y-1$ anticlockwise.
Since
\begin{equation}\label{egrad}
\sum_{z=1}^{N}\left[j_\eta(z,z+1)-C(\eta)\right]=0
\end{equation}
the function $U$ is well defined. Condition \eqref{egrad} coincides with the condition that $j_\eta-C(\eta)$ is a gradient vector field for any configuration $\eta$ and $U_x$ is one of the associated potentials fixed in such a way that it is zero at $x$. We have indeed
\begin{equation}\label{indeed}
U_x(\eta,y+1)-U_x(\eta,y)=j_\eta(y,y+1)-C(\eta)\,.
\end{equation}
Consider the averaged function
\begin{equation}\label{indeedav}
\overline U(\eta,y):=\frac 1N \sum_{x=1}^N U_x(\eta,y)\,,
\end{equation}
for which still we have the equality
\begin{equation}\label{indeedproprio}
\overline U(\eta,y+1)-\overline U(\eta,y)=j_\eta(y,y+1)-C(\eta).
\end{equation}
We will show that there exists a function $h: \Gamma^{\mathbb Z_N}\to \mathbb R$ such that $\overline U(\eta,y)=\tau_yh(\eta)$ and \eqref{indeedproprio} becomes \eqref{imbr}. First of all we prove the invariance property
\begin{equation}\label{invariance}
\overline U(\tau_z \eta, y+z)=\overline U(\eta,y)\,.
\end{equation}
We have
\begin{eqnarray*}
\overline U(\tau_z \eta, y+z)&=&\frac 1N\sum_{x=1}^NU_x (\tau_z\eta,y+z)\\
&=&\frac 1N \sum_{x=1}^N\sum_{v\in [x,y+z-1]}  \left[j_{\tau_z\eta}(v,v+1)-C(\tau_z\eta)\right]\\
&=&\frac 1N \sum_{x=1}^{N}\sum_{v\in [x,y+z-1]}\left[j_\eta(v-z,v-z+1)-C(\eta)\right]\\
&=&\frac 1N \sum_{x=1}^{N}U_{x-z}(\eta,y)=
\overline U(\eta,y)\,.
\end{eqnarray*}
We used the invariance \eqref{tc} and the translational invariance of $C$.
If we define $h(\eta):=\overline U(\eta,0)$, by the invariance \eqref{invariance} we have
\begin{equation}\label{lem4}
\overline U(\eta,y)=\overline U(\tau_{-y}\eta,0)=h(\tau_{-y}\eta)=\tau_yh(\eta)\,.
\end{equation}
Formula \eqref{formulah} can be easily deduced from \eqref{indeedav} and the definition of $h$ or directly checking the validity of \eqref{imbr} using \eqref{prim}.
To finish the proof suppose now that there are two functions $h$ and $h'$ satisfying \eqref{imbr}. Since $C$ is univocally determined we deduce $\tau_{x+1}(h-h')=\tau_x(h-h')$ that means that $h-h'$ is translational invariant. This finishes the proof of the Theorem.
\end{proof}
The basic idea of the above Theorem is the usual strategy to construct the potential of a gradient discrete vector field plus a subtle use of the translational covariance of the model.
It is interesting to observe that a one dimensional system of particles is of gradient type (with a possibly not local $h$)
if and only if $C(\eta)=0$. This corresponds to say that for any fixed configuration $\eta$ then $j_\eta(x,y)$
is a gradient vector field. This was already observed in \cite{BDGJLstoc-int,N}.
We discuss some examples

\subsubsection{The gradient case}\label{surp}
If the vector field is of gradient type, i.e. $j_\eta(x,x+1)=\tau_{x+1}h(\eta)-\tau_xh(\eta)$ then we have
$$
\sum_{x=1}^{N-1}\frac{x}{N}j_\eta(x,x+1)=h(\eta)-\frac 1N\sum_{x=1}^N\tau_xh(\eta)\,,
$$
and since the second term on the right hand side is translational invariant we reobtain in formula \eqref{formulah} the original function $h$ up to a translational invariant function.

\subsubsection{The 2-SEP}
The model we are considering is the 2-SEP for which on each lattice site there can be at most 2 particles (\cite{KL99} section 2.4). The generalization to k-SEP can be easily done likewise. More precisely we have $\Gamma=\{0,1,2\}$.
The dynamics is defined by the rates
$$
c_{x,x\pm 1}(\eta)=\chi^+(\eta(x))\chi^-(\eta(x\pm 1))
$$
where $\chi^+(\alpha)=1$ if $\alpha >0$ and zero otherwise while $\chi^-(\alpha)=1$ if $\alpha<2$ and zero otherwise.
We denote also $D^\pm_\eta(x,x+1)$ local functions associated with the presence on the bond $(x,x+1)$ of what we call respectively a positive or negative discrepancy. More precisely $D^+_\eta(x,x+1)=1$ if $\eta(x)=2$ and $\eta(x+1)=1$ and zero otherwise. We have instead $D^-_\eta(x,x+1)=1$ if $\eta(x+1)=2$ and $\eta(x)=1$ and zero otherwise. We define also $D_\eta:=D^+_\eta-D^-_\eta$.

The instantaneous current across the edge $(x,x+1)$ associated with the configuration $\eta$ is
$$
j_\eta(x,x+1):=\chi^+(\eta(x))-\chi^+(\eta(x+1))+D_\eta(x,x+1)
$$
For this specific model formulas \eqref{prim} and \eqref{formulah} become
$$
C(\eta)=\frac 1N\sum_{x\in \mathbb Z_N}D_\eta(x,x+1)
$$
$$
h(\eta)=-\chi^+(\eta(0))+\sum_{x=1}^{N-1}\frac{x}{N}D_\eta(x,x+1)\,.
$$

\subsubsection{ASEP}
The asymmetric simple exclusion process is characterized by the rates $c_{x,x+1}(\eta)=p\eta(x)(1-\eta(x+1))$ and $c_{x,x-1}(\eta)=q\eta(x)(1-\eta(x-1))$. Recall that $\mathfrak {C}(\eta)$ is the collection of clusters of particles in the configuration $\eta$. Given a cluster $c\in \mathfrak C$ we call $\partial^lc,\partial^rc\in\{0,1,\dots ,N-1\}$ the first element of the lattice on the left of the leftmost site of the cluster and the rightmost site of the cluster respectively. The decomposition \eqref{imbr} holds with
\begin{equation}\label{caldaia}
C(\eta)=\frac{(p-q)\left|\mathfrak C(\eta)\right|}{N}\,,
\end{equation}
where $\left|\mathfrak C(\eta)\right|$ denotes the number of clusters and
\begin{equation}\label{bagnetto}
h(\eta)=\frac 1N\sum_{c\in \mathfrak C(\eta)}\left[p\partial^rc-q\partial^lc\right]\,.
\end{equation}

\subsubsection{Reversible and gradient models}
We show in a concrete example the usefulness of the criterium to recognize a gradient model
illustrated just before Section \ref{surp}.
A problem of interest is to find models that are at the same time of gradient type and reversible. This problem is discussed in Section II.2.4 of \cite{Spohn}. In dimension $d>2$ it is difficult to find models satisfying the two conditions at the same time. In $d=1$ it has been proved in \cite{N} that this is always possible for any invariant finite range  Gibbs measure and nearest neighbors exchange dynamics with $\Gamma=\{0,1\}$. Let us show how it is possible to prove this fact with a simple argument for a special class of interactions. The interactions that we consider are of the form $J_A\prod_{x\in A}\eta(x)$ where $A$ are intervals of the lattice. The numbers $J_A$ are translational invariant and satisfy the relation $J_A=J_{\tau_xA}$. Consider the most general form of the reversible rates \eqref{genrev} and fix the arbitrary function as $w=1$. The instantaneous current is given by $j_\eta(x,x+1)=e^{H_{\{x,x+1\}}}\left[\eta(x)-\eta(x+1)\right]$. The instantaneous current is different from zero only on the left and right boundary of each cluster of particles. More precisely it is positively directed on the right boundary and negatively directed on the left one. Since the interactions are associated only to intervals and are translational invariant then $e^{H_{\mathfrak e}}$ when $\mathfrak e$ is a boundary edge of a cluster depends only on the size of the cluster. This means that for each cluster the sum of the instantaneous currents on the two boundary edges is identically zero and this implies that
$\sum_{x\in \mathbb Z_N}j_\eta(x,x+1)=0$ for any configuration $\eta$. By Theorem \ref{belteo}, this coincides with the gradient condition. We generated in this way
a gradient reversible dynamics for each Gibbs measure of this type.

\subsection{The two dimensional case}

We consider now the two dimensional torus with vertices $\mathbb Z^2_N:=\mathbb Z^2/N\mathbb Z^2$ and the set of edges $\mathcal E$  given by the pairs of vertices $\{x,y\}$ that are at distance 1.
The functional discrete Hodge decomposition in the 2 dimensional case is as follows.

\begin{theorem}\label{belteo2}
Let $j$  be a translational covariant discrete vector field. Then there exist 4 functions $h,g,C^{(1)}, C^{(2)}$ on configurations of particles such that for an edge of the type $e=(x,x\pm e^{(i)})$ we have
\begin{equation}\label{hodgefun2}
j_\eta(e)=\big[\tau_{e^+}h(\eta)-\tau_{e^-}h(\eta)\big]+\big[\tau_{\mathfrak f^+(e)}g(\eta)-\tau_{\mathfrak f^-(e)}g(\eta)\big]\pm C^{(i)}(\eta)\,.
\end{equation}
The functions $C^{(i)}$ are translational invariant and uniquely identified. The functions $h$ and $g$ are uniquely identified up to additive arbitrary translational invariant functions.
\end{theorem}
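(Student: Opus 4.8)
The plan is to apply the ordinary finite-dimensional Hodge decomposition \eqref{dish2} to the vector field $j_\eta$ for each fixed configuration $\eta$, and then to promote the three resulting components (gradient, circulation, harmonic) to functional form by exploiting the translational covariance \eqref{tc}, exactly as in the proof of Theorem \ref{belteo}. For fixed $\eta$ write $j_\eta=\nabla h_\eta+\delta\psi_\eta+\phi^H_\eta$. The harmonic part is immediate: by \eqref{cco} and \eqref{cenes} we have $\phi^H_\eta(x,x+e^{(i)})=C^{(i)}(\eta)$ with $C^{(i)}(\eta):=\frac{1}{N^2}\sum_{x}j_\eta(x,x+e^{(i)})$, and this value flips to $-C^{(i)}(\eta)$ on the edges $(x,x-e^{(i)})$ since $\varphi^{(i)}(x,x-e^{(i)})=-1$, matching the $\pm C^{(i)}$ of \eqref{hodgefun2}. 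Translational invariance of $C^{(i)}$ follows from \eqref{tc} by the same reindexing used for $C$ in the one dimensional case.

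For the gradient component I would solve a functional Poisson problem. Taking divergences in \eqref{dish2} shows that $h_\eta$ solves $\nabla\cdot\nabla h_\eta=\nabla\cdot j_\eta$, and covariance \eqref{tc} gives $\nabla\cdot j_\eta(x)=\tau_x d(\eta)$ with $d(\eta):=\nabla\cdot j_\eta(0)$, so the source is itself covariant. Fixing the additive constant by the translation-invariant normalization $\sum_x h_\eta(x)=0$ makes the solution unique; since $x\mapsto h_{\tau_z\eta}(x+z)$ solves the same normalized Poisson equation, uniqueness forces $h_{\tau_z\eta}(x+z)=h_\eta(x)$. Setting $h(\eta):=h_\eta(0)$ then yields $h_\eta(x)=\tau_x h(\eta)$, which reproduces the first bracket of \eqref{hodgefun2}.

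The circulation component is handled dually. Since $\delta\psi_\eta(e)=\psi_\eta(f^+(e))-\psi_\eta(f^-(e))$ by \eqref{defd}, the operator $\delta$ acts as a gradient on the dual lattice $\mathbb Z_N^{*2}$ and the $2$-form $\psi_\eta$ plays the role of a potential there. Applying the face circulation $(\mathrm{curl}\,j_\eta)(f):=\sum_{e\in f}j_\eta(e)$ to \eqref{dish2} annihilates the gradient and harmonic parts and turns the identity into a Poisson problem $\mathrm{curl}(\delta\psi_\eta)=\mathrm{curl}(j_\eta)$ on the dual lattice; covariance \eqref{tc} gives $(\mathrm{curl}\,j_\eta)(f)=(\mathrm{curl}\,j_{\tau_z\eta})(\tau_z f)$, and the source sums to zero over faces (each edge is counted in $f^+$ and $f^-$ with opposite signs), so the problem is solvable. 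Normalizing $\sum_f\psi_\eta(f)=0$ and repeating the uniqueness-plus-covariance argument on the dual lattice gives $\psi_{\tau_z\eta}(\tau_z f)=\psi_\eta(f)$; with $g(\eta):=\psi_\eta(f_0)$ for the base face $f_0$ at the origin this reads $\psi_\eta(f)=\tau_{\mathfrak f}g(\eta)$, reproducing the second bracket of \eqref{hodgefun2}.

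Assembling the three pieces gives \eqref{hodgefun2}. For uniqueness, averaging over $x$ as in \eqref{prim} forces the $C^{(i)}$; and if two decompositions coincide, then for each fixed $\eta$ the directness of the sum in \eqref{dish2} yields $\nabla(h_\eta-h'_\eta)=0$ and $\delta(\psi_\eta-\psi'_\eta)=0$, i.e. $\tau_x(h-h')$ and $\tau_{\mathfrak f}(g-g')$ are constant in $x$ and $f$, so $h-h'$ and $g-g'$ are translational invariant, which is exactly the asserted freedom. I expect the main obstacle to be the circulation step: one must set up the dual Poisson problem carefully, track the face/dual-vertex correspondence and the orientations so that $\delta\psi_\eta$ indeed equals $\tau_{\mathfrak f^+(e)}g-\tau_{\mathfrak f^-(e)}g$, and check that the covariance argument transfers verbatim to the dual lattice. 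The gradient and harmonic steps, by contrast, are close analogues of the one dimensional construction.
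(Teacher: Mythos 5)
Your proof is correct and follows the same overall architecture as the paper's: a pointwise (fixed-$\eta$) Hodge decomposition, identification of the harmonic coefficients $C^{(i)}$ by spatial averaging, promotion of the gradient and circulation potentials to functional form via translational covariance, and the standard uniqueness argument at the end. The one place where you genuinely diverge is the mechanism for producing a \emph{covariant} potential, which is the heart of the proof. The paper first shows (via uniqueness of the fixed-$\eta$ decomposition) that the components $j_\eta^\nabla$ and $j_\eta^\delta$ are themselves translational covariant, then builds for each base point $x$ (resp.\ dual vertex $x^*$) a potential $U_x$ (resp.\ $\psi_{x^*}$) vanishing there and averages over base points; the averaged potential satisfies $\overline U(\tau_z\eta,y+z)=\overline U(\eta,y)$ by a change of variables, which is what allows setting $h(\eta):=\overline U(\eta,0)$. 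You instead characterize the potential as the unique zero-mean solution of a Poisson problem (on the lattice for $h_\eta$, on the dual lattice for $\psi_\eta$) whose source term ($\nabla\cdot j_\eta$, respectively the face circulation of $j_\eta$) is manifestly covariant, and deduce covariance of the solution from uniqueness of the normalized problem. The two devices are equivalent: your normalized potential differs from the paper's averaged one only by a translational invariant function, which is exactly the allowed freedom. Your route slightly streamlines matters by not requiring the covariance of $j_\eta^\nabla$ and $j_\eta^\delta$ as a separate preliminary step, while the paper's averaging has the advantage of producing explicit closed formulas for the potentials (as in \eqref{formulah} in one dimension) rather than characterizing them implicitly through a Poisson equation.
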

\begin{proof}
Since for any fixed $\eta$ we have that $j_\eta(\cdot)$ is an element of $\Lambda^1$ we can perform for any fixed $\eta$ the Hodge decomposition writing $j_\eta=j_\eta^\nabla+j_\eta^\delta+j_\eta^H$. The harmonic component $j_\eta^H$ is related to the functions
$C^{(i)}$ that are equivalent to the constants $c_i$ computed by \eqref{cco}. Since we have now discrete vector fields depending on the configuration $\eta$ we will obtain not constants but functions of the configurations. In particular we have a formula completely analogous of \eqref{cco} that is
\begin{equation}\label{defci}
C^{(i)}(\eta):=\frac{1}{N^2}\sum_{x\in \mathbb Z^2_N}j_\eta(x,x+e^{(i)})\,.
\end{equation}
The functions in \eqref{defci} are clearly translational invariant and uniquely determined as can be seen taking
the scalar product with $\phi^{(i)}$ on both sides of \eqref{hodgefun2}. We have
$$
j^H_\eta=C^{(1)}(\eta)\phi^{(1)}+C^{(2)}(\eta)\phi^{(2)}\,.
$$
As a second step we show that the translational covariance of $j_\eta$ is inherited also by the other components $j^\delta_\eta, j^\nabla_\eta$.  Fix arbitrarily $z\in \mathbb Z_N^2$ and consider two new discrete vector fields depending on configurations
\begin{equation}\label{invdgH}
\left\{
\begin{array}{l}
\tilde j_\eta^\delta(x,y) := j_\eta^\delta(x+z,y+z)\,, \\
\tilde j_\eta^\nabla(x,y) := j_\eta^\nabla(x+z,y+z)\,. \\
\end{array}
\right.
\end{equation}
For any fixed $\eta$ we have $\tilde j^\delta_\eta\in \delta\Lambda^2$ and $\tilde j^\nabla_\eta\in \nabla\Lambda^0$ since translations preserve these properties. We can write
\begin{eqnarray*}
& &\tilde j_\eta^\delta(x,y) +\tilde j_\eta^\nabla(x,y)+j^H_\eta(x,y)=j_\eta(x+z,y+z)=\\
& &j_{\tau_{-z}\eta}(x,y)=j_{\tau_{-z}\eta}^\delta(x,y)+j_{\tau_{-z}\eta}^\nabla(x,y)+j_\eta^H(x,y)\,.
\end{eqnarray*}
We used the translational covariance of $j_\eta$ and the translational invariance of $j^H_\eta$.
Since for any fixed $\eta$ the Hodge decomposition is unique we obtain $j_{\tau_{z}\eta}^\delta(x+z,y+z)=j_{\eta}^\delta(x,y)$ and
$j_{\tau_{z}\eta}^\nabla(x+z,y+z)=j_{\eta}^\nabla(x,y)$ that is the translational covariance of $j_\eta^\delta$ and $j_\eta^\nabla$.

Since for any fixed $\eta$ we have that $j_\eta^\nabla$ is translational covariant and of gradient type, we  show that there exists a function $h$ such that
\begin{equation}\label{e2}
j_\eta^\nabla(x,x+e^{(i)})=\tau_{x+e^{(i)}}h(\eta)-\tau_x h(\eta) \qquad i=1,2\,.
\end{equation}
The proof is very similar to the one dimensional case. We define a function $U_x(\eta, y)$ where $x,y\in \mathbb Z_N^2$. This is defined by
\begin{equation}\label{grd2}
U_x(\eta,y):=\sum_{i=0}^{k-1}j_\eta^\nabla(z^{(i)},z^{(i+1)})\,,
\end{equation}
where $z^{(0)}, \dots z^{(k)}$ is any path on $\mathbb Z_N^2$ going from $z^{(0)}=x$ to $z^{(k)}=y$. Since $j^\nabla$ is of gradient type the value is independent of the path.
As in the one dimensional case we consider the averaged function
\begin{equation}\label{indeedav2}
\overline U(\eta,y):=\frac{1}{N^2} \sum_{x\in \mathbb Z_N^2} U_x(\eta,y)\,,
\end{equation}
for which we have the equality
\begin{equation}\label{indeedproprio2}
\overline U(\eta,y)-\overline U(\eta,x)=j_\eta^\nabla(x,y)\,,
\end{equation}
for any $(x,y)\in E$.
To finish the proof we need to show that there exists a function $h:\Gamma^{\mathbb Z_N^2}\to \mathbb R$ such that $\overline U(\eta,y)=\tau_yh(\eta)$ and \eqref{indeedproprio2} becomes \eqref{e2}. First of all we prove the invariance property
\begin{equation}\label{invariance2}
\overline U(\tau_z \eta, y+z)=\overline U(\eta,y)\,.
\end{equation}
This follows by the symmetry
\begin{equation}\label{kgl}
U_x(\eta,y)=U_{x+z}(\tau_z\eta, y+z)
\end{equation}
that is obtained directly by the definition and the translational covariance of $j^\nabla$.
Likewise in the one dimensional case we have then
\begin{eqnarray*}
& &\overline U(\tau_z \eta, y+z)=\frac{1}{N^2}\sum_{x\in \mathbb Z_N^2}U_x (\tau_z\eta,y+z)\\
& &=\frac{1}{N^2} \sum_{x'\in \mathbb Z_N^2}U_{x'+z} (\tau_z\eta,y+z)=\frac{1}{N^2} \sum_{x'\in \mathbb Z_N^2}U_{x'} (\eta,y)=\overline U(\eta,y)\,.
\end{eqnarray*}
We used the change of variables $x'=x-z$ and the invariance \eqref{kgl}.
If we define $h(\eta):=\overline U(\eta,0)$, by the invariance \eqref{invariance2} we have
\begin{equation}\label{lem4*}
\overline U(\eta,y)=\overline U(\tau_{-y}\eta,0)=h(\tau_{-y}\eta)=\tau_yh(\eta)\,.
\end{equation}
The uniqueness of $h$ up to an additive translational invariant function is obtained like in the one dimensional case.

Consider now $j_\eta^\delta\in \delta\Lambda^2$ for any $\eta$ and translational covariant. This means that for any $\eta$ there exists a 2 form $\psi(\eta;f)$ such that
\begin{equation}\label{craxi}
j_\eta^\delta(e)=\psi\big(\eta; f^+(e)\big)-\psi\big(\eta; f^-(e)\big)\,, \qquad e\in E\,,
\end{equation}
where we recall that $f^+(e)$ is the unique anticlockwise oriented face to which $e$ belongs and $f^-(e)$ is the unique anticlockwise oriented face to which $-e$ belongs. A two form satisfying \eqref{craxi} is identified up to an arbitrary constant. For any $x^*\in \mathbb Z_N^{*2}$  we introduce $\psi_{x^*}(\eta; f)$ that is the 2 form satisfying \eqref{craxi} and
such that $\psi_{x^*}(\eta;f^*)=0$, where $f^*$ is the anticlockwise oriented face associated with $x^*$. We define
\begin{equation}\label{betto}
\overline\psi(\eta; f):=\frac{1}{N^2}\sum_{x^*\in \mathbb Z_N^{*2}}\psi_{x^*}(\eta; f)
\end{equation}
that satisfies
\begin{equation}\label{craxi2}
j_\eta^\delta(e)=\overline\psi\big(\eta; f^+(e)\big)-\overline\psi\big(\eta; f^-(e)\big)\,, \qquad e\in E\,.
\end{equation}
With essentially the same computation as in the gradient case we can show that
\begin{equation}\label{cius}
\overline\psi(\tau_z \eta; \tau_z f)=\overline\psi(\eta; f)\,, \qquad \forall z\,.
\end{equation}
We define the function $g(\eta):=\overline\psi(\eta,f^0)$ where $f^0$ is the anticlockwise oriented face associated with the vertex
$\left(\frac 12, \frac 12\right)\in \mathbb Z_N^{*2}$. From \eqref{cius} we obtain
\begin{equation}\label{gugarios}
\overline\psi(\eta; f)=\overline\psi(\tau_{-x}\eta; f^0)=g(\tau_{-x}\eta)=\tau_{\mathfrak f}g(\eta)
\end{equation}
where $x$ is the vertex of the original lattice and such that $x^*(f)=\left(\frac 12,\frac 12\right)+x$ and $x^*(f)$ is the vertex of the dual lattice associated with the face $f$.
Using \eqref{gugarios} the relation \eqref{craxi2} becomes
\begin{equation}
j_\eta^\delta(e)=\tau_{\mathfrak f^+(e)}g(\eta)-\tau_{\mathfrak f^-(e)} g(\eta)\,.
\end{equation}
Uniqueness of $g$ up to an arbitrary additive translation invariant function follows by the usual argument.
\end{proof}

\subsubsection{A non gradient lattice gas with local decomposition}
We construct a model of particles satisfying an exclusion rule, with jumps only trough nearest neighbors sites and having a non trivial decomposition of the instantaneous current \eqref{hodgefun2} with $C^{(i)}=0$ and $h$ and $g$ local functions. The functions $h$ and $g$ have to be chosen suitably in such a way that the instantaneous current is always zero inside cluster of particles and empty clusters and has to be always such that $j_\eta(x,y)\geq 0$ when $\eta(x)=1$ and $\eta(y)=0$. A possible choice is the following perturbation of the SEP. We fix $h(\eta)=-\eta(0)$ and $g(\eta)$ with $D(g)=\{0, e^{(1)},e^{(2)}, e^{(1)}+e^{(2)} \}$ (we denote by $0$ the vertex $(0,0)$) defined as follows. We have $g(\eta)=\alpha$ if $\eta(0)=\eta(e^{(1)}+e^{(2)})=1$ and
$\eta(e^{(1)})=\eta(e^{(2)})=0$. We have also  $g(\eta)=\beta$ if $\eta(0)=\eta(e^{(1)}+e^{(2)})=0$ and
$\eta(e^{(1)})=\eta(e^{(2)})=1$. The real numbers $\alpha,\beta$ are such that $|\alpha|+|\beta|<1$. For all the remaining configurations we have $g(\eta)=0$.  Since $\Gamma=\{0,1\}$ the rates of jump are uniquely determined by $c_{x,y}(\eta)=\left[j_\eta(x,y)\right]_+$.

\subsubsection{A perturbed zero range dynamics} We consider a dynamics similar to the dynamics
in Section \ref{basile} but with $\Gamma=\mathbb N$ and having a local non trivial decomposition. We say that the face
$\{0, e^{(1)}, e^{(2)}, e^{(1)}+e^{(2)}\}$ is full in the configuration
$\eta\in \mathbb N^{\mathbb Z_N^2}$ if $\min\{\eta(0), \eta(e^{(1)}), \eta(e^{(2)}), \eta(e^{(1)}+e^{(2)})\}>0$.
Consider two non negative functions $w^\pm$ that are identically zero when the face $\{0, e^{(1)}, e^{(2)}, e^{(1)}+e^{(2)}\}$
is not full. Given a positive function $\tilde h:\mathbb N\to \mathbb R^+$, we define the rates of jump as
\begin{equation}\label{chiu}
c_{e^-,e^+}(\eta)=\tilde h(\eta(e^-))+\tau_{\mathfrak f^+(e)}w^++\tau_{\mathfrak f^-(e)}w^-\,.
\end{equation}
This corresponds to a perturbation of a zero range dynamics such that one particle jumps from one site with $k$ particles with a rate $\tilde h(k)$. The perturbation increases the rates of jump if the jump is on the edge of a full face. The gain depends on the orientation and the effect of different faces is additive. For such a model the instantaneous current
has a local decomposition \eqref{hodgefun2} with $h(\eta)=-\tilde h(\eta(0))$ and $g(\eta)=w^+(\eta)-w^-(\eta)$.

\section{Stationarity and orthogonality}\label{SeO}

We discuss a different approach to the problem of constructing non reversible stationary non equilibrium states. This is based on the functional Hodge decomposition discussed in Section \ref{fdhhd}. The main idea is that it is possible to interpret the stationary equations as an orthogonality condition with respect to a suitable harmonic discrete vector field. Translational covariant discrete vector fields can be generated using the functional discrete Hodge decomposition. This interpretation can be given in general, however, to have a clearer view of the geometric construction, we discuss the simplified case
of a Bernoulli invariant measure. More general measures can be discussed as well (see Remark \ref{remarco}). We find families of solutions but generalizations are possible.

This approach and the one developed in Section \ref{imdff} are just two different perspectives
on the same problem and are therefore strongly related. In particular it should be possible
to prove a general Theorem stating the equivalence of the two conditions. Less ambitiously, given any model associated to an invariant measure, it should be possible to verify both the orthogonality and the divergence free condition. We are not going to discuss
this equivalence here. We just show in the next first example how to recognize similar structures in the models constructed.

\subsection{One-dimensional Kawasaki dynamics}\label{1dkb}
We start discussing possibly the simplest case, a one dimensional  Kawasaki dynamics with a translational covariant rate of exchange given by $c_{\mathfrak e}(\eta)$. Of course we can write
the exchange rate as the sum of two jumps rates
\begin{equation}\label{e=s}
c_{\mathfrak e}(\eta)=c_{e^-,e^+}(\eta)+c_{e^+,e^-}(\eta)\,.
\end{equation}
We consider the case of the one dimensional ring. We can write $c_{x,x+1}(\eta)=\eta(x)(1-\eta(x+1))\tilde{c}_{x,x+1}(\eta)$ and $c_{x+1,x}(\eta)=\eta(x+1)(1-\eta(x))\tilde{c}_{x+1,x}(\eta)$ where  $D\left(\tilde{c}_{x,y}\right)\cap\{x,y\}=\emptyset$. The stationary condition can be written as
\begin{equation}\label{mauf}
\mu(\eta)\sum_{\mathfrak e\in \mathcal E}\left[c_{\mathfrak e}(\eta)-\frac{\mu(\eta^{\mathfrak e})}{\mu(\eta)}c_{\mathfrak e}(\eta^{\mathfrak e})\right]=0\,.
\end{equation}
Using \eqref{e=s} and the fact that $\mu$ is a Bernoulli measure so that $\frac{\mu(\eta^{\mathfrak e})}{\mu(\eta)}=1$, with some algebra we can show that \eqref{mauf} is equivalent to
\begin{equation}\label{equivuf}
\sum_{x\in \mathbb Z_N}\Big[\left(\eta(x+1)-\eta(x)\right)\left(\tilde{c}_{x+1,x}(\eta)-\tilde{c}_{x,x+1}(\eta)\right)\Big]=0\,.
\end{equation}
The expression inside squared parenthesis in \eqref{equivuf} is symmetric for the exchange $x\leftrightarrow x+1$. We can naturally interpret \eqref{equivuf} as the orthogonality condition
\begin{equation}\label{orti}
\langle \gamma_\eta , \mathbb I\rangle=0\,,
\end{equation}
where $\mathbb I$ is the harmonic discrete vector field defined by $\mathbb I (x,x+1)=1$ (and consequently $\mathbb I(x+1,x)=-1$) while $\gamma_\eta$ is the discrete vector field defined by
$$\gamma_\eta(x,x+1)=\Big(\eta(x+1)-\eta(x)\Big)\Big(\tilde{c}_{x+1,x}(\eta)-\tilde{c}_{x,x+1}(\eta)\Big)$$
(and consequently $\gamma_\eta(x+1,x)=-\gamma_\eta(x,x+1)$). It may appear strange to construct an
antisymmetric object starting from a symmetric one (the argument of \eqref{equivuf}), but this is exactly what must
happens to interpret \eqref{equivuf} as a scalar product.
\begin{remark}\label{remarco}
In the case of a different invariant measure the stationary condition can be written again as \eqref{orti} where the discrete vector field $\gamma_\eta$ is given by
\begin{equation}\label{ggibbs}
\gamma_\eta(x,x+1):=\Big(\eta(x)- r_x(\eta)\eta(x+1)\Big)\left(\tilde c_{x,x+1}(\eta)-\frac{ \tilde c_{x+1,x}(\eta)}{r_x(\eta)}\right)\,,
\end{equation}
with $r_x(\eta):=\frac{\pi\left(\eta_{\{x,x+1\}^c}1_x0_{x+1}\right)}{\pi\left(\eta_{\{x,x+1\}^c}0_x1_{x+1}\right)}$. Also in this case $\gamma_\eta$ is translational covariant and assuming that $r_x(\eta)$ is local (as it is for a finite range Gibbs measure) we can proceed
similarly to the Bernoulli case.
\end{remark}
We use the orthogonality \eqref{orti} and the similar relationship in the following just algebraically.
It does not seem to be a direct physical interpretation of the vector fields involved.
The vector field $\gamma_\eta$ is translational covariant and can be decomposed like \eqref{imbr}. The orthogonality condition \eqref{orti} can be satisfied if and only if the harmonic part $C$ in the decomposition \eqref{imbr} is identically zero and we get that the stationary condition is satisfied if and only if there exists a function $h$ such that
\begin{equation}\label{non}
\gamma_\eta(x,x+1)=(\eta(x+1)-\eta(x))\left(\tilde{c}_{x+1,x}(\eta)-\tilde{c}_{x,x+1}(\eta)\right)=\tau_{x+1}h(\eta)-\tau_xh(\eta)\,.
\end{equation}
To solve the above equation we have to consider a function $h$ such that the right hand side of \eqref{non} is zero when
$\eta(x)=\eta(x+1)$ since the left hand side is clearly zero in this case. Moreover when $\eta(x)\neq \eta(x+1)$, we have to impose that
\begin{equation}\label{non+}
\frac{\tau_{x+1}h(\eta)-\tau_xh(\eta)}{\eta(x+1)-\eta(x)}
\end{equation}
is a function invariant under the exchange of the values of $\eta(x)$ and $\eta(x+1)$. This is because by \eqref{non} we have that \eqref{non+}
has to coincide with $\tilde{c}_{x+1,x}(\eta)-\tilde{c}_{x,x+1}(\eta)$ that does not depend on $\eta(x),\eta(x+1)$. If we fix an $h$
that satisfies these constraints, we obtain by \eqref{non} the rates $\tilde c$.

A first possibility is to fix $h(\eta)=\eta(0)C(\eta)$ where $C(\eta)$ is a translational invariant function.
In this way we get $\tau_{x+1}h(\eta)-\tau_xh(\eta)=(\eta(x+1)-\eta(x))C(\eta)$ and we have
$$
\tilde{c}_{x+1,x}(\eta)-\tilde{c}_{x,x+1}(\eta)=C(\eta)\,.
$$
Since the left hand side does not depend on $\eta(x)$ and $\eta(x+1)$ and $C$ is translational invariant, the only possibility is that
$C$ is a constant function.
All the non negative solutions $X,Y$ of the equation $X-Y=A$ are given by
\begin{equation}\label{posol}
\left\{
\begin{array}{l}
X=[A]_++S\,,\\
Y=[-A]_++S\,,
\end{array}
\right.
\end{equation}
where $[\cdot]_+$ denotes the positive part and $S\geq 0$ is arbitrary.
We obtain that the general solution in this case is
\begin{equation}\label{elnc}
\left\{
\begin{array}{l}
\tilde c_{x+1,x}(\eta)=\Big[C\Big]_++\tau_xs(\eta)\,, \\
\tilde c_{x,x+1}(\eta)=\Big[-C\Big]_++\tau_xs(\eta)\,,
\end{array}
\right.
\end{equation}
where $s$ is an arbitrary non negative function such that $D(s)\cap\{0,1\}=\emptyset$. In formula \eqref{elnc} the additive part involving the function $s$ corresponds to the reversible part while the remaining part corresponds to the irreversible part. The asymmetric exclusion process is obtained as a special case.
The corresponding decomposition into cycles contains necessarily
global cycles as discussed in Section \ref{ewgc}.

Another possibility is to consider a function $h$ of the form $h=\tau_1 g+g$. Then we have
\begin{equation}\label{grgh}
\tau_{x+1}h-\tau_xh=\tau_{x+2}g-\tau_xg\,.
\end{equation}
A very general class  of functions $h$ that can be used in \eqref{non} is then obtained by $h=g+\tau_1g$ where
\begin{equation}\label{trd}
g(\eta)=\big(\eta(-1)-\eta(-2)\big)\big(\eta(1)-\eta(0)\big)\tilde g(\eta)
\end{equation}
and $\tilde g$ is any function such that $D(\tilde g)\cap W(\{-1,0\})=\emptyset$, where we recall that the symbol $W(\mathfrak e)$ has been defined just above \eqref{weights cycles kawa 1-d}.

We obtain for our class of functions $h$ a general form of the rates given by
\begin{equation}\label{eln}
\left\{
\begin{array}{l}
\tilde c_{x+1,x}(\eta)=\Big[(\eta(x+3)-\eta(x+2))\tau_{x+2}\tilde g-(\eta(x-1)-\eta(x-2))\tau_x\tilde g\Big]_++\tau_xs(\eta)\,, \\
\tilde c_{x,x+1}(\eta)=\Big[(\eta(x-1)-\eta(x-2))\tau_x\tilde g-(\eta(x+3)-\eta(x+2))\tau_{x+2}\tilde g\Big]_++\tau_xs(\eta)\,,
\end{array}
\right.
\end{equation}
where $s$ is an arbitrary non negative function such that $D(s)\cap\{0,1\}=\emptyset$. In formula \eqref{eln} the additive part involving the function $s$ corresponds to the reversible part while the part involving the function $\tilde g$ corresponds to the irreversible part.

To compare the models obtained in \eqref{eln} with the models obtained in Section \ref{odkd} we have first of all
to recall that here we are considering product invariant measures so that in formulas \eqref{euna} and \eqref{edue}
the value of the energy is constant and can be incorporated into the arbitrary functions. If we fix $\tau_1 \tilde g=w^+-w^-$ we obtain that the rates obtained in Section \ref{odkd} are a subset of the models defined by \eqref{eln}. This is obtained observing that
in both cases we will have
\begin{equation}\label{pes}
\tilde{c}_{x+1,x}-\tilde{c}_{x,x+1}=\big(\eta(x+3)-\eta(x+2)\big)\tau_{x+2}\tilde g+
\big(\eta(x-2)-\eta(x-1)\big)\tau_{x}\tilde g\,.
\end{equation}
Formula \eqref{eln} gives the most general positive solution to \eqref{pes} while instead this is not the case for the rates of
Section \ref{odkd} (note for example that selecting $s=0$ in \eqref{eln} we obtain rates such that $\min\left\{\tilde{c}_{x+1,x},\tilde{c}_{x,x+1}\right\}=0$ while this is not always possible for the rates in Section \ref{odkd}).
This means that any model constructed with cycles like in Section \ref{odkd} can be obtained by \eqref{eln} for some $\tilde g$ and $s$;
however among the rates defined by \eqref{eln} there are some models for which the typical current cannot be decomposed
by cycles of length two and by cycles like in Figure 3.

We stress again that in this case, as well as in the followings, we obtain very general families
of models parametrized by arbitrary functions. Considering simple cylindric functions $\tilde g$ and $s$ it is
possible to obtain simple and completely explicit models.

\subsection{One-dimensional Glauber dynamics}

The stationary condition for a Glauber dynamics on the one dimensional torus is
\begin{equation}\label{stazb}
\mu(\eta)\sum_{x\in \mathbb Z_N}\left[c_x(\eta)-\frac{\mu(\eta^x)}{\mu(\eta)}c_x(\eta^x)\right]=0\,.
\end{equation}
We consider translational covariant rates that can be written as
\begin{equation}\label{rate}
c_x(\eta)=\eta(x)\tau_xc^-(\eta)+(1-\eta(x))\tau_xc^+(\eta)\,,
\end{equation}
where $D\left(c^\pm\right)\cap\{0\}=\emptyset$. We consider the case of
Bernoulli invariant measures of parameter $p$. The stationary condition \eqref{stazb} is equivalent to
\begin{eqnarray}\label{ste1}
& &\sum_{x\in \mathbb Z_N}\left(\eta(x)\tau_xc^-(\eta)+(1-\eta(x))\tau_xc^+(\eta)\right)\nonumber \\
& &=\sum_{x\in \mathbb Z_N}\left(\eta(x)\frac{(1-p)}{p}\tau_xc^+(\eta)+(1-\eta(x))\frac{p}{(1-p)}\tau_xc^-(\eta)\right)=0\,,
\end{eqnarray}
that with some algebra becomes
\begin{equation}\label{step3}
\sum_{x\in \mathbb Z_N}\left[\left(1-\frac{\eta(x)}{p}\right)\left(\tau_xc^+(\eta)-\frac{p}{(1-p)}\tau_xc^-(\eta)\right)\right]=0\,.
\end{equation}
As before we can naturally interpret \eqref{step3} as the orthogonality condition $\langle\phi_\eta,\mathbb I\rangle=0$ where the translational covariant vector field $\phi_\eta$ is defined by
\begin{equation}\label{deffi}
\phi_\eta(x,x+1):=\left(1-\frac{\eta(x)}{p}\right)\left(\tau_xc^+(\eta)-\frac{p}{(1-p)}\tau_xc^-(\eta)\right)\,,
\end{equation}
setting $\phi_\eta(x+1,x)=-\phi_\eta(x,x+1)$ by antisymmetry. Since we are in 1 dimension we have that \eqref{step3} holds
if and only if there exists a function $h$ such that
$$
\phi_\eta(x,x+1)=\tau_{x+1}h(\eta)-\tau_x h(\eta)\,.
$$
By translational covariance this relation is equivalent to
\begin{equation}\label{unasola}
c^+(\eta)-\frac{p}{(1-p)}c^-(\eta)=\frac{\tau_{1}h(\eta)- h(\eta)}{\left(1-\frac{\eta(0)}{p}\right)}\,.
\end{equation}
An important fact to observe is that the left hand side of \eqref{unasola} does not depend on $\eta(0)$ and then this must be true also for the right hand side. We have a general family of functions satisfying this constraint that is given by
\begin{equation}\label{genht}
h(\eta)= \left(1-\frac{\eta(-1)}{p}\right)\left(1-\frac{\eta(0)}{p}\right)\tilde{h}(\eta)\,,
\end{equation}
where $\tilde h$ is an arbitrary function such that $D(\tilde h)\cap \{-1,0\}=\emptyset$.
The other constraint that has to be satisfied is that the rates are nonnegative functions and this is obtained considering positive solutions of \eqref{unasola} by using \eqref{posol}. We obtain
\begin{equation}\label{formratesg1}
\left\{
\begin{array}{l}
c^+(\eta)=\left[\left(1-\frac{\eta(1)}{p}\right)\tau_1\tilde h-\left(1-\frac{\eta(-1)}{p}\right)\tilde h\right]_++s(\eta)\,,\\
c^-(\eta)=\frac{1-p}{p}\left(\left[\left(1-\frac{\eta(-1)}{p}\right)\tilde h-\left(1-\frac{\eta(1)}{p}\right)\tau_1\tilde h\right]_++s(\eta)\right)\,,
\end{array}
\right.
\end{equation}
where $s$ is an arbitrary non negative function such that $D(s)\cap \{0\}=\emptyset$.
Again the part with $s$ is the reversible contribution.

\subsection{Two dimensional Kawasaki dynamics }

In two dimensions the stationary condition under the hypothesis of Bernoulli invariant measure is analogous to \eqref{mauf}. We have indeed
\begin{equation}\label{dentino}
\sum_{x\in \mathbb Z_N^2}\sum_{i=1,2}\left[\left(\eta\left(x+e^{(i)}\right)-\eta(x)\right)\left(\tilde{c}_{x+e^{(i)},x}(\eta)-\tilde{c}_{x,x+e^{(i)}}
(\eta)\right)\right]=0\,,
\end{equation}
that can be interpreted as $\langle\gamma_\eta,\mathbb I\rangle=0$ where the translational covariant vector field $\gamma_\eta$ is given by \begin{equation}\label{kkk}
\gamma_\eta\left(x,x+e^{(i)}\right)=\left(\eta\left(x+e^{(i)}\right)-\eta(x)\right)\left(\tilde{c}_{x+e^{(i)},x}(\eta)
-\tilde{c}_{x,x+e^{(i)}}(\eta)\right)\,, \qquad i=1,2\,.
\end{equation}
The values of $\gamma_\eta\left(x+e^{(i)},x\right)$ are fixed by antisymmetry. The orthogonality condition is satisfied if and only if
for any fixed $\eta$ the vector $\gamma_\eta$ is obtained as a linear combination of a vector field in $\nabla \Lambda^0$,
a vector in $\delta\Lambda^2$ and the vector $\phi^{(1)}-\phi^{(2)}$. Since the vector filed $\gamma_\eta$ is translational covariant
there exist functions $h,g$ and a real number $\lambda$ such that
\begin{equation}\label{hodgefun22}
\gamma_\eta(e)=\big[\tau_{e^+}h(\eta)-\tau_{e^-}h(\eta)\big]+\big[\tau_{\mathfrak f^+(e)}g(\eta)-\tau_{\mathfrak f^-(e)}g(\eta)\big]\pm \lambda\,,
\end{equation}
where the sign $\pm$ has to be fixed as $+$ if $e=(x,x+e^{(1)})$  or $e=(x,x-e^{(2)})$ for some $x$ and has to be fixed as $-$
in the remaining cases.
We discuss two cases. In both cases the orthogonality with the vector $\mathbb I$ is verified but the splitting \eqref{hodgefun22} is not trivial.

\smallskip

The first case is as follows. Let $k(\eta)$ and $v(\eta)$ be two function having a structure like in Section \ref{1dkb} but respectively along the two directions of the plane. More precisely let $\tilde k$ be a function
such that $D\left(\tilde k\right)\subseteq \{je^{(1)}\,,\,j=1,\dots ,N\}$ and $D\left(\tilde k\right)\cap \{-2e^{(1)}, -e^{(1)}, 0, e^{(1)}\}=\emptyset$. Define then
$$
k(\eta):=\left[\eta\left(-e^{(1)}\right)-\eta\left(-2e^{(1)}\right)\right]
\left[\eta\left(e^{(1)}\right)-\eta\left(0\right)\right]\tilde k(\eta)\,.
$$
Likewise let $\tilde v$ be a function
such that $D\left(\tilde v\right)\subseteq \{je^{(2)}\,,\,j=1,\dots ,N\}$ and $D\left(\tilde v\right)\cap \{-2e^{(2)}, e^{(2)}, 0, e^{(2)}\}=\emptyset$. Define then
$$
v(\eta):=\left[\eta\left(-e^{(2)}\right)-\eta\left(-2e^{(2)}\right)\right]
\left[\eta\left(e^{(2)}\right)-\eta\left(0\right)\right]\tilde v(\eta)\,.
$$
We define $h^1(\eta):=k(\eta)+\tau_{e^{(1)}}k(\eta)$ and  $h^2(\eta):=v(\eta)+\tau_{e^{(2)}}v(\eta)$.
Finally we define
\begin{equation}\label{terr}
\left\{
\begin{array}{l}
\gamma_\eta(x,x+e^{(1)})=\tau_{x+e^{(1)}}h^1-\tau_x h^1\,,\\
\gamma_\eta(x,x+e^{(2)})=\tau_{x+e^{(2)}}h^2-\tau_x h^2\,.
\end{array}
\right.
\end{equation}
The orthogonality $\langle\gamma_\eta,\mathbb I\rangle=0$ follows by
$$
\langle\gamma_\eta,\phi^{(1)}\rangle=\langle\gamma_\eta,\phi^{(2)}\rangle=0\,,
$$
obtained like in the one dimensional case. Inserting \eqref{terr} in the left hand side of \eqref{kkk} we obtain
the rates $\tilde c$ like in Section \ref{1dkb}.

\smallskip

The second case that it is possible to describe is the following. Let $b$ be a function such that $D(b)\cap \left\{0,e^{(1)},e^{(2)},e^{(1)}+e^{(2)}\right\}=\emptyset$. Consider also $c^{(i)}$, $i=1,2$ two arbitrary constants. We define
\begin{equation}\label{clml}
\tilde{c}_{x,y}(\eta)
-\tilde{c}_{y,x}(\eta)=\left[\tau_{\mathfrak f^+(x,y)}b(\eta)-\tau_{\mathfrak f^-(x,y)}b(\eta)\right]\pm c^{(i)}\,,
\end{equation}
when $y=x\pm e^{(i)}$.
By the property of $D(b)$ it is possible to obtain by \eqref{clml} the rates $\tilde c$ using again \eqref{posol}.
According to \eqref{clml} if we define the vector field $\psi_\eta(x,y):=\tilde{c}_{x,y}(\eta)
-\tilde{c}_{y,x}(\eta)$, we have $\psi_\eta\in\delta\Lambda^2\oplus\Lambda^1_H$ for any $\eta$. We define also the vector field
$\phi_\eta(x,y):=\eta(y)-\eta(x)$ and for any $\eta$ we have $\phi_\eta\in \nabla \Lambda^0$. Since the stationary conditions
\eqref{dentino} can also naturally be interpreted as $\langle\phi_\eta,\psi_\eta\rangle=0$ and the two discrete vector fields belong to orthogonal subspaces the stationarity conditions are automatically satisfied.

\subsection{Two dimensional Glauber dynamics}
We consider now the two dimensional Glauber dynamics but the same construction can be done also in higher dimensions. The stationary condition for a Bernoulli invariant measure of parameter $p$ can be written again like \eqref{step3}
\begin{equation}\label{step32}
\sum_{x\in \mathbb Z_N^2}\left[\left(1-\frac{\eta(x)}{p}\right)\left(\tau_xc^+(\eta)-\frac{p}{(1-p)}\tau_xc^-(\eta)\right)\right]=0\,.
\end{equation}
This expression is of the form $\sum_{x\in \mathbb Z_N^2}\tau_x g(\eta)=0$ where the function $g$ is given by
\begin{equation}\label{carsoli}
g(\eta)=\left(1-\frac{\eta(0)}{p}\right)\left(c^+(\eta)-\frac{p}{(1-p)}c^-(\eta)\right)\,.
\end{equation}
Given a $\tilde g\in \Lambda^0$ such that $\sum_{x\in \mathbb Z_N^2}\tilde g(x)=0$ then there exists a $\phi\in \Lambda^1$ such that
$\tilde g(x)=\nabla\cdot \phi(x)$. This fact can be proved in several ways, for example considering a $\phi$ of gradient type
$\phi(x,y)=k(y)-k(x)$ we have that $k$ has to satisfy the discrete Poisson equation $\Delta k=\tilde g$ that has always a solution.
In our case we have that $\tilde g(x)=\tilde g_\eta(x)$ depends on $\eta$ so that
there exists an $\eta$ dependent discrete vector field $\phi_\eta$ such that $\nabla\cdot \phi_\eta=\tilde g_\eta$. Since $\tilde g_\eta(x)=\tau_xg(\eta)$ is translational covariant, with the same arguments as in Theorem \ref{belteo2} we can prove that
the $\eta$ dependent discrete vector field $\phi_\eta$ can be chosen translational covariant too. The most general translational covariant discrete vector field has the following structure. Let $h(\eta), v(\eta)$ be two function then we define
\begin{equation}\label{dvfc}
\left\{
\begin{array}{l}
\phi_\eta(x,x+e^{(1)})=\tau_x h(\eta)\,,  \\
\phi_\eta(x,x+e^{(2)})=\tau_x v(\eta)\,,
\end{array}
\right.
\end{equation}
and this is the most general translational covariant $\eta$ dependent discrete vector field.
We obtain that the stationary condition is satisfied if and only if there exists two functions $h,v$ such that
\begin{equation}\label{quest}
g(\eta)= \nabla\cdot \phi_\eta=h(\eta)-\tau_{-e^{(1)}}h(\eta)+v(\eta)-\tau_{-e^{(2)}}v(\eta)\,,
\end{equation}
where $g$ is given by \eqref{carsoli}.
As in the one dimensional case to solve the equation \eqref{quest} we have to fix the functions $h,v$ in such a way that
$$
\frac{h(\eta)-\tau_{-e^{(1)}}h(\eta)+v(\eta)-\tau_{-e^{(2)}}v(\eta)}{\left(1-\frac{\eta(0)}{p}\right)}
$$
has a domain $D$ such that $D\cap \{0\}=\emptyset$. A general family that satisfies this constraint is given by
$$
\left\{
\begin{array}{l}
h(\eta)=\left(1-\frac{\eta(0)}{p}\right)\left(1-\frac{\eta\left(e^{(1)}\right)}{p}\right)\tilde h(\eta)\,,\\
v(\eta)=\left(1-\frac{\eta(0)}{p}\right)\left(1-\frac{\eta\left(e^{(2)}\right)}{p}\right)\tilde v(\eta)\,,
\end{array}
\right.
$$
where $D\left(\tilde h\right)\cap \{0,e^{(1)}\}=\emptyset$ and $D\left(\tilde v\right)\cap \{0,e^{(2)}\}=\emptyset$.
Under these assumptions we can find the positive solutions $c^\pm$ using again \eqref{posol}. Writing only the non reversible contribution we have that $c^+(\eta)$ is equal to
\begin{align}\label{uuu}
&\left[\left(1-\frac{\eta\left(e^{(1)}\right)}{p}\right)\tilde h-\left(1-\frac{\eta\left(-e^{(1)}\right)}{p}\right)\tau_{-e^{(1)}}\tilde h\right.\nonumber \\
&\left.+\left(1-\frac{\eta\left(e^{(2)}\right)}{p}\right)\tilde v-\left(1-\frac{\eta\left(-e^{(2)}\right)}{p}\right)\tau_{-e^{(2)}}\tilde v\right]_+
\end{align}
while $c^-(\eta)$ is given by $c^-(\eta)=\frac{1-p}{p}\left[-i(\eta)\right]_+$, where $i(\eta)$ is the function appearing inside the positive part in \eqref{uuu}.

\section*{Acknowledgments}
D. G. wrote part of this work during his stay at the Institut Henri  Poincare - Centre Emile Borel during the trimester « Stochastic Dynamics Out  of Equilibrium » and thanks this institution for hospitality and support.

\end{document}